\documentclass[journal]{IEEEtran}

\usepackage{graphicx}

\usepackage{amsmath,amsfonts,amssymb,amsthm} 
\usepackage[ruled,vlined]{algorithm2e}
\usepackage{booktabs}
\usepackage{array}
\usepackage{makecell}
\usepackage[caption=false,font=normalsize,labelfont=sf,textfont=sf]{subfig}
\usepackage{textcomp}
\usepackage{stfloats}
\usepackage{url}
\usepackage{verbatim}
\usepackage{cite}
\usepackage{xcolor}
\usepackage{balance}
\usepackage{glossaries}
\usepackage{enumitem}

\def\BibTeX{{\rm B\kern-.05em{\sc i\kern-.025em b}\kern-.08em
    T\kern-.1667em\lower.7ex\hbox{E}\kern-.125emX}}
    
\DeclareMathOperator{\Tr}{Tr}
\newcommand\numberthis{\addtocounter{equation}{1}\tag{\theequation}} 
\DeclareMathOperator*{\argmax}{arg\,max}
\DeclareMathOperator*{\argmin}{arg\,min}

\begin{document}
\newcommand{\GM}[1]{
{\color{blue} {\textbf{GM: }}
    #1
}}
\newcommand{\GMcomment}[1]{
{\footnotesize {\color{blue} {[\textbf{GM: }}
    #1
    ]
}}}


\newcommand{\defeq}{\mathrel{:=}}
\newcommand{\bigO}{\mathcal{O}}

\newcommand{\Diag}{\text{diag}}
\newcommand{\Vect}{\text{vec}}
\newcommand{\Transpose}[1]{{#1}^\mathrm{T}}
\newcommand{\TransposeSign}{\mathrm{T}}
\newcommand{\Hermitian}[1]{{#1}^{\dagger}}
\newcommand{\HermitianSign}{\dagger}
\newcommand{\MPInverse}[1]{{#1}^{+}}
\newcommand{\Inverse}[1]{{#1}^{-1}}
\newcommand{\Frob}{\mathrm{F}}
\newcommand{\Row}{\mathbf{r}}
\newcommand{\Col}{\mathbf{c}}
\newcommand{\Comp}{\bot}
\newcommand{\SqrtMat}[1]{{#1}^{1/2}}
\newcommand{\ElemProd}{\odot}
\newcommand{\Kron}{\otimes}
\newcommand{\Vectorize}[1]{\text{vec}({#1})}
\newcommand{\CandVal}[1]{\widetilde{#1}} 
\newcommand{\EstVal}[1]{\widehat{#1}} 
\newcommand{\RdmVal}[1]{\overline{#1}} 

\newcommand{\Rset}{\mathbb{R}}
\newcommand{\Cset}{\mathbb{C}}


\newcommand{\SRxInd}{m}
\newcommand{\SRxTh}{$m \textsuperscript{th}$ }
\newcommand{\SRxAll}{M}
\newcommand{\Aoa}{\theta}
\newcommand{\AoaVec}{\boldsymbol \Theta}
\newcommand{\AoaSteerVec}{\mathbf{a}}
\newcommand{\AoaSteerMat}{\mathbf{A}}

\newcommand{\SubInd}{q}
\newcommand{\SubTh}{$q \textsuperscript{th}$ }
\newcommand{\SubAll}{Q}
\newcommand{\Range}{\tau}
\newcommand{\RangeVec}{\bolsymbol \tau}
\newcommand{\RangeSteerVec}{\mathbf{t}}
\newcommand{\RangeSteerMat}{\mathbf{T}}

\newcommand{\TgtInd}{k}
\newcommand{\TgtTh}{$k \textsuperscript{th}$ }
\newcommand{\TgtAll}{K}
\newcommand{\TgtPosVec}{\mathbf{x}}
\newcommand{\TgtPosMat}{\mathbf{X}}
\newcommand{\TgtPosX}{x}
\newcommand{\TgtPosY}{y}
\newcommand{\TgtVelVec}{\dot{\mathbf{x}}}
\newcommand{\TgtVelMat}{\dot{\mathbf{X}}}
\newcommand{\TgtVelX}{\dot{x}}
\newcommand{\TgtVelY}{\dot{y}}

\newcommand{\ObsInd}{d}
\newcommand{\ObsTh}{$d \textsuperscript{th}$ }
\newcommand{\ObsAll}{D}
\newcommand{\Doppler}{f}
\newcommand{\DopplerVec}{\boldsymbol \Omega}
\newcommand{\DopplerSteerVec}{\mathbf{f}}
\newcommand{\DopplerSteerMat}{\mathbf{F}}

\newcommand{\RpInd}{p}
\newcommand{\RpTh}{$p \textsuperscript{th}$ }
\newcommand{\RpAll}{P} 

\newcommand{\ChnlVec}{\mathbf{h}}
\newcommand{\ChnlMat}{\mathbf{H}}
\newcommand{\MIMO}[1]{{#1}^\text{MIMO}}
\newcommand{\DiagVec}{\mathbf{d}}
\newcommand{\DiagMat}{\mathbf{D}}

\newcommand{\CoefVal}{\beta}
\newcommand{\CoefVec}{\boldsymbol \beta}
\newcommand{\CoefMat}{\mathbf{B}}
\newcommand{\CoefLength}{M}
\newcommand{\CoefVecbis}{\mathbf{b}}
\newcommand{\PathVal}{\alpha}
\newcommand{\PathVec}{\boldsymbol \alpha}

\newcommand{\NoiseVal}{n}
\newcommand{\NoiseVec}{\mathbf{n}}
\newcommand{\NoiseMat}{\mathbf{N}}
\newcommand{\NoiseVar}{\sigma^2}
\newcommand{\NoiseVarVec}{\boldsymbol \sigma^2}
\newcommand{\NoiseSD}{\sigma}
\newcommand{\NoiseSDMM}{\sigma_m}
\newcommand{\NoiseSDMMp}{\sigma_{p,m}}
\newcommand{\SNR}{\mathrm{SNR}}

\newcommand{\SteerVec}{\mathbf{a}}
\newcommand{\SteerMat}{\mathbf{A}}
\newcommand{\Identity}{\mathbf{I}}
\newcommand{\ProjMat}{\mathbf{P}}
\newcommand{\CompProjMat}{\mathbf{P}^{\perp}}
\newcommand{\SCovMat}{\mathbf{R}}

\newcommand{\Pdf}{f}
\newcommand{\MeanVal}{\mu}
\newcommand{\MeanVec}{\boldsymbol \mu}
\newcommand{\CovMat}{\boldsymbol \Gamma}
\newcommand{\RdmVec}{\mathbf{z}}
\newcommand{\RdmMat}{\mathbf{Z}}
\newcommand{\CN}{\mathcal{CN}} 

\newcommand{\ParamVec}{\boldsymbol \gamma}
\newcommand{\LLFun}{\mathcal{L}}
\newcommand{\GridAll}{N}
\newcommand{\GridInd}{n}
\newcommand{\GridTh}{$n \textsuperscript{th}$ }
\newcommand{\Grid}{\mathcal{G}}
\newcommand{\FFTMat}{\mathbf{F}}
\newcommand{\FFTVec}{\mathbf{f}}
\newcommand{\FFTVal}{f}
\newcommand{\Weigts}{w}

\newcommand{\PseudoSpec}{\mathcal{J}}
\newcommand{\PseudoVec}{\mathbf{J}}

\newcommand{\SigSpaceVec}{\mathbf{u}}
\newcommand{\SigSpaceMat}{\mathbf{U}}
\newcommand{\SigEigVal}{\lambda}
\newcommand{\SigEigVec}{\boldsymbol \lambda}
\newcommand{\SigEigMat}{\boldsymbol \Lambda}

\newcommand{\NoiseSpaceVec}{\mathbf{g}}
\newcommand{\NoiseSpaceMat}{\mathbf{G}}
\newcommand{\NoiseEigVal}{\varrho}
\newcommand{\NoiseEigVec}{\boldsymbol \varrho}
\newcommand{\NoiseEigMat}{\boldsymbol \Sigma}

\newcommand{\GreedySet}[2]{{#1}_{#2}}

\newcommand{\WaveNumber}{k_c}
\newcommand{\SubSpacing}{\Delta_f}
\newcommand{\ArraySpacing}{\Delta_d}
\newcommand{\Bandwith}{B}
\newcommand{\TimeFrame}{T}
\newcommand{\CpLen}{T_{\mathrm{cp}}}
\newcommand{\SymbolVal}{s}
\newcommand{\SymbolVec}{\mathbf{s}}
\newcommand{\SymbolMat}{\mathbf{S}}
\newcommand{\MappingMat}{\boldsymbol \Xi}
\newcommand{\SRxSigVec}{\mathbf{y}}
\newcommand{\SRxSigMat}{\mathbf{Y}}

\newcommand{\CorrRatio}{\mathcal{S}}
\newcommand{\CorrMat}{\mathbf{S}}
\newcommand{\ResRatio}{\mathcal{T}}
\newcommand{\ResMat}{\mathbf{T}}

\newacronym{ap}{AP}{access point}
\newacronym{pwr}{PWR}{passive Wi-Fi radar}
\newacronym{pr}{PR}{passive radar}
\newacronym{stx}{STx}{sensing transmitter}
\newacronym{srx}{SRx}{sensing receiver}
\newacronym{rp}{RP}{radar pair}
\newacronym{ula}{ULA}{uniform linear array}
\newacronym{simo}{SIMO}{single-input and multiple-output}
\newacronym{aoa}{AoA}{angle-of-arrival}
\newacronym{doa}{DoA}{direction-of-arrival}
\newacronym{snr}{SNR}{signal-to-noise ratio}
\newacronym{awgn}{AWGN}{additive white Gaussian noise}
\newacronym{ofdm}{OFDM}{orthogonal frequency-division multiplexing}
\newacronym{ols}{OLS}{orthogonal least squares}
\newacronym{omp}{OMP}{orthogonal matching pursuit}

\newacronym{music}{MUSIC}{multiple signal classification}
\newacronym{wmusic}{WMUSIC}{weighted MUSIC}

\newacronym{imusic}{iMUSIC}{iterative MUSIC}
\newacronym{iwmusic}{iWMUSIC}{}

\newacronym{gimusic}{G-iMUSIC}{greedy iterative MUSIC}
\newacronym{giwmusic}{G-iWMUSIC}{}

\newacronym{ompimusic}{OMP-iMUSIC}{}
\newacronym{olsimusic}{OLS-iMUSIC}{}
\newacronym{ompiwmusic}{OMP-iWMUSIC}{}
\newacronym{olsiwmusic}{OLS-iWMUSIC}{}

\newacronym{esprit}{ESPRIT}{estimation of signal parameters via rotational invariance techniques}
\newacronym{ml}{ML}{maximum likelihood}
\newacronym{mle}{MLE}{maximum likelihood estimator}
\newacronym{map}{MAP}{maximum a posteriori}
\newacronym{itc}{ITC}{information theoretic criteria}
\newacronym{aic}{AIC}{Akaike information criterion}
\newacronym{aicc}{AICc}{corrected AIC}
\newacronym{bic}{BIC}{ Bayesian information criterion}

\newacronym{rmse}{RMSE}{root mean square error}
\newacronym{pdf}{PDF}{probability density function}
\newacronym{iid}{i.i.d.}{independent and identically distributed}
\newacronym{evd}{EVD}{eigen value decomposition}
\newacronym{svd}{SVD}{Singular Value Decomposition}
\newacronym{ff}{FF}{Far-Field}
\newacronym{crlb}{CRLB}{Cramér-Rao Lower Bound}
\newacronym{fft}{FFT}{Fast Fourier Transform}
\newacronym{dft}{DFT}{Discrete Fourier Transform}
\newacronym{sota}{SOTA}{state of the art}
\newacronym{flop}{FLOP}{floating-point operation}
\newacronym{bols}{BOLS}{Block orthogonal least squares}
\newacronym{bomp}{BOMP}{Block orthogonal matching pursuit}
\newacronym{mimo}{MIMO}{Multiple-input and multiple-output}
\newacronym{isac}{ISAC}{integrated sensing and communication}
\newacronym{6g}{6G}{Sixth Generation}
\newtheorem{assumption}{Assumption}
\newtheorem{lemma}{Lemma} 
\newtheorem{proposition}{Proposition} 
\newtheorem{corollary}{Corollary} 

\title{\Acrshort{gimusic}: Greedy Iterative \Acrshort{music} Algorithms \\ for Multi-Target \Acrshort{doa} Estimation}

\author{
Martin Willame,~\IEEEmembership{Graduated Student Member,~IEEE}, Gilles Monnoyer,~\IEEEmembership{Member,~IEEE},  \\
François Horlin,~\IEEEmembership{Fellow,~IEEE} and 
Jérôme Louveaux,~\IEEEmembership{Fellow,~IEEE}
\thanks{This paper was submitted on the 26th of May, 2026. Martin Willame is with the Université Catholique de Louvain (UCLouvain) and the Université Libre de Bruxelles (ULB). Gilles Monnoyer and Jérôme Louveaux are with the UCLouvain (e-mails: $\{$martin.willame, gilles.monnoyer, jerome.louveaux@uclouvain.be$\}$). François Horlin is with the ULB (e-mails: francois.horlin@ulb.be$\}$).}
}
\markboth{Submitted to IEEE Transactions on Signal Processing}%
{Willame \MakeLowercase{\textit{et al.}}: \Acrshort{gimusic}: Greedy Iterative \Acrshort{music} Algorithms \\ for Multi-Target \Acrshort{doa} Estimation}

\maketitle

\begin{abstract}
This paper presents novel algorithms for multi-target \gls{doa} estimation in array signal processing.
Although the \gls{mle} asymptotically attains the Cramér-Rao bound, its exponential complexity motivates practical alternatives, such as greedy or subspace-based methods.
In this context, greedy methods such as \gls{omp} and \gls{ols} are sensitive to early selection errors, especially for angularly proximate targets, whereas subspace-based methods such as \gls{music} present angular super-resolution capabilities but degrade under strong inter-target signal correlation.
To overcome these limitations, we propose two \gls{gimusic} algorithms, namely \acrshort{ompimusic} and \acrshort{olsimusic}, derived from a unified framework that links subspace and greedy estimations.
Unlike prior \acrshort{imusic} approaches, the proposed methods require only one initial \gls{evd} and avoid computing eigendecomposition at each iteration.
They also admit \gls{fft}-accelerated implementations for \glspl{ula}, enabling low-complexity operation.
Monte Carlo simulations demonstrate improved detection and precision over conventional \gls{omp}, \gls{ols}, and \gls{music}, as well as reduced processing time compared to greedy baselines.
Finally, we introduce diagnostic metrics that interpret performance across signal correlation and angular proximity regimes, supporting generalization beyond the specific \gls{ofdm} radar scenario considered.
\end{abstract}

\begin{IEEEkeywords}
iMUSIC, direction-of-arrival, greedy, MUSIC, OMP, OLS
\end{IEEEkeywords}

\section{Introduction} \label{sec:introduction}
\glsresetall

\IEEEPARstart{M}{ulti}-target \gls{doa} estimation remains a central problem in array signal processing, with applications across the radar, sonar, wireless communications, and acoustics fields \cite{pesavento_three_2023,salama2025DoA}.
Given noisy measurements collected by a sensor array, the objective is to recover the set of source directions with high resolution and reliability.
Although this objective is conceptually simple, practical performance is fundamentally constrained by a limited array aperture, finite sample support, low \gls{snr} and inter-target correlation, while real-time operation further imposes strict computational requirements.
In this paper, we study algorithmic designs that jointly improve statistical robustness, angular resolution, and computational efficiency for multi-target \gls{doa} estimation. 

\subsection{Related Work and Motivation}
To motivate our contribution, we review state-of-the-art methods for multi-target \gls{doa} estimation and highlight their trade-offs.
Table~\ref{tab:doa_comparison} summarizes these methods qualitatively.
The \gls{mle} is asymptotically efficient under \gls{awgn} by reaching the Cramér-Rao bound as the number of observations grows \cite{928686,Foutz2008DoA,salama2025DoA}.
However, it leads to a non-convex least-squares optimization problem whose complexity scales exponentially with the number of targets, limiting its practical use.

To reduce this complexity, greedy methods such as \gls{omp} \cite{pati_orthogonal_1993,tropp_greed_2004} and \gls{ols} \cite{chen_orthogonal_1989,blumensath_difference_2007} provide efficient approximations \cite{bourguignon_sparse_2011,pesavento_three_2023}.
Greedy algorithms iteratively build the solution by selecting one candidate \gls{doa} at a time, based on a selection criterion that approximates the \gls{mle}.
While \gls{ols} generally provides more accurate approximations to \gls{mle} solutions \cite{wen_orthogonal_2021,pesavento_three_2023}, \gls{omp} remains popular in radar because of its lower cost.
When the targets present angularly proximate \gls{doa}, \gls{ols} offers clear gains, as shown both empirically \cite{mukhopadhyay_signal_2016} and theoretically \cite{soussen_joint_2013,10979421}. 
Nevertheless, greedy methods remain sensitive to early selection steps, whose capacity to resolve angularly proximate targets is constrained by the Rayleigh limit \cite{RayleighLimit}.
This can yield erroneous first estimates that propagate errors across iterations.

Subspace methods, notably \gls{music} and \gls{wmusic}, provide a complementary paradigm for \gls{doa} estimation \cite{1143830,61541,7322593,10501703}.
Such methods exploit information in the signal structure to achieve angular super-resolution, defined as the ability to resolve closely spaced targets beyond the Rayleigh limit in a single step.
Empirical studies indicate that \gls{music} often surpasses \gls{wmusic} \cite{61541}, showing the superiority of unweighted subspace methods.
However, subspace methods degrade when target signals are highly correlated \cite{pesavento_three_2023}.

Consequently, prior work has proposed \gls{imusic} variants \cite{5745290,10537353,6714005,9419990,rs14174260}, either through successive greedy detections \cite{5745290,10537353} or iterative refinement of all estimates \cite{6714005,9419990,rs14174260}.
Yet these approaches share two key limitations: they lack a principled statistical foundation, and they typically require repeated \gls{evd} at each iteration. 
The former can limit performance, while the latter is costly for real-time operation.

\begin{table*}[t]
\centering
\caption{Qualitative comparison of state-of-the-art multi-target \Acrshort{doa} estimation methods and the proposed \acrshort{imusic} variants.}
\label{tab:doa_comparison}
\renewcommand{\arraystretch}{1.2}
\setlength{\tabcolsep}{6pt}
\begin{tabular}{lllllllll}
\toprule
\textbf{Method} &
\makecell[l]{\textbf{\Acrshort{doa}}    \\ \textbf{References}} &
\textbf{Model} &
\makecell[l]{\textbf{Estimation}    \\ \textbf{Strategy}}&
\textbf{Resolution} &
\makecell[l]{\textbf{Angular}    \\ \textbf{Proximity} \\ \textbf{Handling}} &
\makecell[l]{\textbf{Signal}    \\ \textbf{Correlation} \\ \textbf{Handling}} &
\makecell[l]{\textbf{Computational}    \\ \textbf{Complexity}} &
\makecell[l]{\textbf{Key}    \\ \textbf{Limitation}} \\
\midrule
\Acrshort{mle} &
\cite{salama2025DoA,Foutz2008DoA} &
Parametric &
\makecell[l]{Multidimensional \\Search}  &
Optimal  &
Optimal &
Optimal &
Very high &
\makecell[l]{Nonlinear \\ optimization} \\
\midrule
\Acrshort{omp} &
\cite{pesavento_three_2023,10979421} &
Sparse  &
\makecell[l]{Iterative \\ greedy pursuit}  &
\makecell[l]{Angular \\ Rayleigh Limit}  &
Average &
Good &
Low &
\makecell[l]{Poor for angularly \\ proximate targets}   \\
\midrule
\Acrshort{ols} &
\cite{pesavento_three_2023,10979421} &
Sparse  &
\makecell[l]{Iterative \\ greedy pursuit}  &
\makecell[l]{Angular \\ Rayleigh Limit}  &
Good &
Very Good &
Moderate &
\makecell[l]{Poor in first steps \\ for angularly \\ proximate targets} \\
\midrule
\Acrshort{music} &
\cite{salama2025DoA,1143830} &
Subspace &
\makecell[l]{Monodimensional \\Search}  &
\makecell[l]{Super \\Resolution}  &
Very Good &
Bad &
Very Low &
\makecell[l]{Poor for \\ correlated signals}   \\
\midrule
\midrule
\Acrshort{ompimusic} &
This work &
\makecell[l]{Subspace \\ \& Sparse}  &
\makecell[l]{Iterative \\ greedy pursuit}  &
\makecell[l]{Super \\Resolution}   &
Good  &
Good &
Very Low &
\makecell[l]{Lower Performance \\ than \Acrshort{olsimusic}} \\
\midrule
\Acrshort{olsimusic} &
This work &
\makecell[l]{Subspace \\ \& Sparse}   &
\makecell[l]{Iterative \\ greedy pursuit}  &
\makecell[l]{Super \\Resolution}   &
Very Good &
Very Good &
Low &
\makecell[l]{Higher Complexity \\ than \Acrshort{ompimusic}}  \\
\bottomrule
\end{tabular}
\end{table*}

\subsection{Main Contributions}
To address these limitations, we propose two \gls{gimusic} methods---\textbf{\gls{ompimusic}} and \textbf{\gls{olsimusic}}---that combine the greedy frameworks of \gls{omp} and \gls{ols}, with the subspace-based approach of \gls{music}.
Table~\ref{tab:doa_comparison} qualitatively compares the proposed algorithms with state-of-the-art methods.
Our main contributions are:

\begin{enumerate} [wide, labelindent=0pt]

\item \textbf{Greedy iterative \acrshort{music} algorithms with improved estimation performance:}
We develop two \gls{gimusic} methods for multi-target \gls{doa} estimation, along with their weighted \gls{imusic} variants.
Unlike existing iterative \gls{music} heuristics, these methods rely on a unified framework that links subspace estimation and greedy optimization.
The proposed algorithms achieve better localization performance than the subspace-based \gls{music} methods and their greedy \gls{omp} and \gls{ols} counterparts while exhibiting lower computational complexity.

\item \textbf{Single-\acrshort{evd} design with \Acrshort{fft}-accelerated implementation:}
A central distinction from prior \gls{imusic} methods is that, in the proposed unified framework, \gls{gimusic} algorithms are shown to avoid repeated \gls{evd}. 
Instead, they require only one initial \gls{evd} to estimate the signal and noise subspaces, and then iteratively update them using efficient updates.
In addition, we introduce \gls{fft}-accelerated implementations in \gls{ula} settings for \gls{ols} and the proposed \gls{gimusic} variants, extending such acceleration beyond existing \gls{music} and \gls{omp} implementations.
As a result, these design choices substantially reduce complexity and support real-time radar operation under strict latency constraints.

\item \textbf{Comprehensive complexity and performance characterization:}
We provide a detailed theoretical complexity analysis together with extensive Monte Carlo simulations benchmarked against state-of-the-art methods.
This study characterizes operational regimes as functions of key radar parameters and identifies where \gls{gimusic} methods provide the largest gains.

\item \textbf{Practical passive \acrshort{ofdm} radar validation with diagnostic metrics:}
We validate the proposed \gls{gimusic} algorithms in a practical passive \gls{ofdm} radar scenario without decoding unknown data symbols.
Beyond this use case, the methods can be applied to other sensing scenarios.
To support this extension, we introduce two diagnostic metrics: the \textit{steering-vector correlation metric} ($\ResRatio$) and the \textit{signal correlation metric} ($\CorrRatio$), which quantify the mean inter-target angular proximity and signal correlation, respectively.
We then interpret performance through these metrics to provide scenario-agnostic guidance.
\end{enumerate}

\subsection{Paper Organization}
The remainder of this paper is organized as follows.
Section~\ref{sec:system_model} introduces the system model. 
Section~\ref{sec:state_of_the_art} reviews state-of-the-art algorithms for multi-target \gls{doa} estimation. 
Section~\ref{sec:imusic} presents the proposed \gls{gimusic} algorithms.
Section~\ref{sec:fft_acceleration} details \gls{fft} accelerations in \gls{ula} settings. 
Section~\ref{sec:Complexity Analysis} provides a comprehensive complexity analysis. 
Section~\ref{sec:simulations} demonstrates performance gains through extensive Monte Carlo simulations. 
Finally, Section~\ref{sec:conclusion} concludes the paper.

\subsection{Mathematical Notation}
Scalars, vectors, and matrices are denoted by $a$, $\mathbf{a}$, and $\mathbf{A}$, respectively.
In the estimation context, $\theta$, $\CandVal{\theta}$, and $\EstVal{\theta}$ denote the true, candidate, and estimated values of a parameter.
The sets of real and complex numbers are denoted by $\Rset$ and $\Cset$.
The trace, transpose, Hermitian transpose, and Moore-Penrose pseudoinverse operators are denoted by $\Tr\left\{\mathbf{A}\right\}$, $\Transpose{\mathbf{A}}$, $\Hermitian{\mathbf{A}}$, and $\MPInverse{\mathbf{A}}$, respectively.
The vectorization operator that stacks all columns of a matrix is denoted by $\Vectorize{\mathbf{A}}$.
For a positive semidefinite matrix $\mathbf{A}$, its square-root matrix is denoted by $\SqrtMat{\mathbf{A}}$ such that $\SqrtMat{\mathbf{A}}\,\Hermitian{(\SqrtMat{\mathbf{A}})}=\mathbf{A}$.
The identity matrix is denoted by $\mathbf{I}_\SRxAll \in \Rset^{\SRxAll \times \SRxAll}$.
The vector $\ell_2$ norm and the matrix Frobenius norm are denoted by $\lVert \mathbf{a} \rVert_2$ and $\lVert \mathbf{A} \rVert_\Frob$, while $\lVert \mathbf{A} \rVert_{2,\Col}$ denotes the column-wise $\ell_2$ norm.

\section{System Model} \label{sec:system_model}
In order to ground the proposed algorithms in a concrete application scenario, we consider a \gls{pr} system designed to estimate the \gls{doa} of multiple targets using the reflected signals transmitted by an \gls{ap}. 
The \gls{ap} and the \gls{pr} can be either co-located (monostatic configuration) or spatially separated (bistatic configuration).
The single-antenna \gls{ap} transmits an \gls{ofdm} downlink signal that the passive receiver captures using a \gls{ula} of $\SRxAll$ antennas with an antenna spacing of $\ArraySpacing$.
The \gls{ap} transmits a sequence of $\ObsAll$ \gls{ofdm} data symbols, separated by $\TimeFrame$ seconds.
Each \gls{ofdm} symbol contains $\SubAll$ subcarriers with uniform spacing $\SubSpacing$.
The first subcarrier has frequency $f_c$ and corresponding wavenumber $\WaveNumber \triangleq 2\pi f_c / c$, where $c$ denotes the speed of light. 
The transmitted complex data symbol for the \ObsTh \gls{ofdm} symbol ($\ObsInd \in \{0,\dots,\ObsAll-1\}$) and \SubTh subcarrier ($\SubInd \in \{0,\dots,\SubAll-1\}$) is denoted as $\SymbolVal_{\ObsInd,\SubInd} \in \Cset$.
These symbols remain unknown to the passive receiver.

The \gls{pr} objective is to estimate the \gls{doa} of $\TgtAll$ targets within the coverage area.
We define the true \gls{doa} of the \TgtTh target as $\Aoa_\TgtInd \in [-\pi/2,\pi/2]$. 
Collecting all target \glspl{doa}, we define the \gls{doa} vector $\AoaVec=\Transpose{[\Aoa_1 \ \dots \ \Aoa_\TgtAll]} \in \Rset^{\TgtAll \times 1}$.
\glspl{doa} measure the angle between the incoming wavefront and the normal vector of the receiver antenna array. 
Under the far-field assumption for all targets with respect to the receiver, the \gls{doa} steering vector $\SteerVec(\Aoa) \in \Cset^{\SRxAll \times 1}$ and the corresponding steering matrix $\SteerMat(\AoaVec) \in \Cset^{\SRxAll \times \TgtAll}$ are defined as
\begin{align}
    \SteerVec(\Aoa) & = \Transpose{\big[1 \ e^{j \WaveNumber\ArraySpacing \sin(\Aoa)} \dots \ e^{j \WaveNumber\ArraySpacing \sin(\Aoa) (\SRxAll-1) }\big]}, \\
    \SteerMat(\AoaVec) & = \big[\SteerVec(\Aoa_1) \ \dots \ \SteerVec(\Aoa_\TgtAll)\big].
\end{align}
We ground the channel model on the following assumptions:
first, only single-bounce multipath signals contribute significantly to the observed channel; second, the receiver achieves perfect timing and frequency synchronization using the transmitted preamble.
Under these assumptions, the channel for the \ObsTh \gls{ofdm} symbol and \SubTh subcarrier is expressed as $\ChnlVec_{\ObsInd,\SubInd}(\AoaVec,\CoefVec_{\ObsInd,\SubInd}) = \SteerMat(\AoaVec) \ \CoefVec_{\ObsInd,\SubInd}$, with, 
\begin{align}
    \CoefVec_{\ObsInd,\SubInd} & = \Transpose{[\CoefVal_{\ObsInd,\SubInd,1} \ \dots \ \CoefVal_{\ObsInd,\SubInd,\TgtAll}]}, \in \Cset^{\TgtAll \times 1}, \\
    \CoefVal_{\ObsInd,\SubInd,\TgtInd} & = \PathVal_{\TgtInd} \ e^{-j2\pi \WaveNumber \Range_{\TgtInd}} \ e^{-j2\pi \SubSpacing \Range_{\TgtInd} \SubInd} \ e^{j2\pi \Doppler_{\TgtInd} \ObsInd/\ObsAll \TimeFrame}, \in \Cset.
\end{align}
where the complex channel coefficient vector $\CoefVec_{\ObsInd,\SubInd}$ captures, for each target $\TgtInd$, the path loss, the radar cross-section and the propagation phase (all encoded in $\alpha_{\TgtInd} \in \Cset$), the range-induced phase shift ($\Range_{\TgtInd}$), and the Doppler shift ($\Doppler_{\TgtInd}$) \cite{9477585}.  
The model characterizing the received signal $\SRxSigVec_{\ObsInd,\SubInd} \in \Cset^{\SRxAll \times 1}$ at the \gls{pr} is given by 
\begin{equation} \label{eq:SRx_sig}
\SRxSigVec_{\ObsInd,\SubInd} = \ChnlVec_{\ObsInd,\SubInd}(\AoaVec,\CoefVec_{\ObsInd,\SubInd}) \ \SymbolVal_{\ObsInd,\SubInd} + \NoiseVec_{\ObsInd,\SubInd} = \SteerMat(\AoaVec) \ \CoefVec_{\ObsInd,\SubInd}^{\prime} + \NoiseVec_{\ObsInd,\SubInd},
\end{equation}
where $\CoefVec_{\ObsInd,\SubInd}^{\prime}\triangleq \CoefVec_{\ObsInd,\SubInd} \ \SymbolVal_{\ObsInd,\SubInd} \in \Cset^{\TgtAll \times 1}$ represents the channel coefficients modulated by the unknown data symbol, and $\NoiseVec_{\ObsInd,\SubInd} \in \Cset^{\SRxAll \times 1}$ denotes the \gls{awgn} contribution.
Each noise vector $\NoiseVec_{\ObsInd,\SubInd} \overset{\text{\acrshort{iid}}}{\sim} \CN(\mathbf{0}, \NoiseVar \ \Identity_\SRxAll)$ is an \gls{iid} circularly symmetric complex Gaussian random vector with zero mean and diagonal covariance matrix. 
The corresponding \gls{snr} is defined as
\begin{equation} \label{eq:snr}
    \SNR = \frac{\frac{1}{\TgtAll}\sum_{\TgtInd=1}^{\TgtAll} |\PathVal_\TgtInd |^2}{\NoiseVar}. 
\end{equation}
With this definition, the \gls{snr} is dominated by the strongest target, which is also the "easiest" to detect. 
The received signal matrix $\SRxSigMat = [\SRxSigVec_{0,0} \ \dots \ \SRxSigVec_{\ObsAll-1,\SubAll-1}]$ is expressed by juxtaposing all observation vectors as
\begin{equation} \label{eq:SRx_mat}
\SRxSigMat = \SteerMat(\AoaVec) \ \CoefMat^{\prime} + \NoiseMat, \in \Cset^{\SRxAll \times \ObsAll \SubAll},
\end{equation}
where the modulated channel coefficient and noise matrices are defined as
\begin{align}
\CoefMat^{\prime} & = [\CoefVec_{0,0}^{\prime} \ \dots \ \CoefVec_{\ObsAll-1,\SubAll-1}^{\prime}], \in \Cset^{\TgtAll \times \ObsAll \SubAll}, \\
\NoiseMat & = [\NoiseVec_{0,0} \ \dots \ \NoiseVec_{\ObsAll-1,\SubAll-1}], \in \Cset^{\SRxAll \times \ObsAll \SubAll}. 
\end{align}
\section{State-of-the-Art Methods for Multi-Target \Acrshort{doa} Estimation} \label{sec:state_of_the_art}
This section reviews the main state-of-the-art \gls{doa} estimators and provides the technical foundation for the proposed \gls{gimusic} approaches introduced in Section~\ref{sec:imusic}.
Throughout this section, we assume that the number of targets $\TgtAll$ is known. 
In practice, $\TgtAll$ can efficiently be estimated with model-order selection methods.
The numerical results in Section~\ref{sec:simulations} integrate and compare two variants of such strategies.

\subsection{Maximum-Likelihood Estimation} \label{sec:MLE}
We first formulate the \gls{mle} for multi-target \gls{doa} estimation.
Assuming known (or pre-estimated) noise variance $\NoiseVar$, we gather the unknown parameters in
$\boldsymbol \gamma = \Transpose{[\Transpose{\AoaVec} \ \Transpose{\Vectorize{\CoefMat^{\prime}}}]}$.
The \gls{mle} identifies the estimate $\widehat{\boldsymbol \gamma}$ that maximizes the likelihood of $\SRxSigMat$ under the observation model in \eqref{eq:SRx_mat}.
Because observations are independent across \gls{ofdm} symbols and subcarriers, the likelihood factorizes into Gaussian terms; after taking the logarithm and removing constants, we obtain
\begin{equation} \label{eq:global_max}
\widehat{\boldsymbol \gamma} 
= \argmin_{\widetilde{\boldsymbol \gamma}} \left \lVert \SRxSigMat - \SteerMat(\widetilde{\AoaVec}) \ \widetilde{\CoefMat}^{\prime} \right \rVert_\Frob^2.
\end{equation}
We next solve \eqref{eq:global_max} with respect to the channel coefficients $\CandVal{\CoefMat}^{\prime}$, which gives a closed-form expression as a function of $\CandVal{\AoaVec}$:
\begin{equation} \label{eq:Moore-Penrose}
\EstVal{\CoefMat}^{\prime} = \Inverse{\left(\Hermitian{\AoaSteerMat}(\CandVal{\AoaVec}) \AoaSteerMat(\CandVal{\AoaVec})\right)} \Hermitian{\AoaSteerMat}(\CandVal{\AoaVec}) \ \SRxSigMat = \MPInverse{\AoaSteerMat}(\CandVal{\AoaVec}) \ \SRxSigMat,
\end{equation}
where $\MPInverse{\AoaSteerMat}(\CandVal{\AoaVec}) = \Inverse{\big(\Hermitian{\AoaSteerMat}(\CandVal{\AoaVec}) \AoaSteerMat(\CandVal{\AoaVec})\big)} \Hermitian{\AoaSteerMat}(\CandVal{\AoaVec})$ denotes the Moore-Penrose pseudoinverse \cite{salama2025DoA}. 
By substituting \eqref{eq:Moore-Penrose} into \eqref{eq:global_max} and introducing the projection matrix onto the steering-matrix column space,
$\ProjMat(\CandVal{\AoaVec}) = \AoaSteerMat(\CandVal{\AoaVec}) \ \MPInverse{\AoaSteerMat}(\CandVal{\AoaVec})$, we obtain the concentrated log-likelihood problem:
\begin{equation} \label{eq:concentrated_MLE}
\widehat{\AoaVec} = \argmin_{\CandVal{\AoaVec}} \left \lVert \SRxSigMat - \ProjMat(\CandVal{\AoaVec}) \ \SRxSigMat \right \rVert_\Frob^2.
\end{equation}
After algebraic manipulations, this problem reduces to
\begin{align*}
\widehat{\AoaVec} 
& = \argmax_{\CandVal{\AoaVec}} \left \lVert \Hermitian{\SRxSigMat} \ \ProjMat(\CandVal{\AoaVec}) \right \rVert_\Frob^2, \numberthis \label{eq:optimization_problem_Norm} \\
& = \argmax_{\CandVal{\AoaVec}}  \Tr\left[ \SCovMat \ \mathbf{P}(\CandVal{\AoaVec}) \right], \numberthis \label{eq:optimization_problem_Trace} \\
& = \argmax_{\CandVal{\AoaVec}} \left \lVert \Hermitian{(\SqrtMat{\SCovMat})} \ \ProjMat(\CandVal{\AoaVec}) \right \rVert_\Frob^2. \numberthis \label{eq:optimization_problem_Rsqrt}
\end{align*}
where we define $\SCovMat = \frac{1}{\ObsAll \SubAll} \SRxSigMat \Hermitian{\SRxSigMat} \in \Cset^{\SRxAll \times \SRxAll}$ as the sample covariance matrix of the received signal.
All three forms in \eqref{eq:optimization_problem_Norm}, \eqref{eq:optimization_problem_Trace}, and \eqref{eq:optimization_problem_Rsqrt} are equivalent, but the latter is less conventional in the literature.
It is presented because it is useful for the developments in Sections~\ref{sec:OMP_OLS} and \ref{sec:imusic}.
It is expressed in terms of the square-root residual sample covariance matrix $\SqrtMat{\SCovMat}$ that can have any size for the column dimension, as long as $\SqrtMat{\SCovMat} \ \Hermitian{\SqrtMat{\SCovMat}} = \SCovMat$ holds.
As such, $\SRxSigMat, \in \Cset^{\SRxAll \times \ObsAll \SubAll}$ is a valid square-root sample covariance matrix. 
In this work, we use $\SqrtMat{\SCovMat} \in \Cset^{\SRxAll \times \SRxAll}$, which is more compact and computationally efficient than $\SRxSigMat$ as shown in the complexity analysis of Section~\ref{sec:Complexity Analysis}.
Such $\SqrtMat{\SCovMat}$ form can be obtained via Cholesky decomposition, \gls{evd}, or other methods. 

The \gls{mle} solution is given by the $\TgtAll$ \gls{doa} estimates in $\CandVal{\AoaVec}$ that maximize \eqref{eq:optimization_problem_Norm} (or equivalently \eqref{eq:optimization_problem_Trace} or \eqref{eq:optimization_problem_Rsqrt}). 
In practice, however, exact \gls{mle} remains computationally prohibitive, with complexity scaling as $\bigO(\GridAll^{\TgtAll}\SRxAll^3)$, where $\GridAll$ is the number of grid points per \gls{doa} variable.

\subsection{Subspace Methods: \Acrshort{music} and \Acrshort{wmusic}} \label{sec:MUSIC}
Because exact \gls{mle} is intractable in many scenarios, subspace methods such as \gls{music} and \gls{wmusic} are widely used alternatives.
They provide super-resolution behavior at substantially lower complexity \cite{1143830,61541,7322593,10501703}.
Empirical evidence further indicates that \gls{music} often outperforms its weighted counterpart in peak detection \cite{61541}.
Algorithm~\ref{alg:music} summarizes this workflow, which is detailed below.
\begin{algorithm}[t]
    \caption{\gls{music} and \gls{wmusic}}
    \label{alg:music}
    \KwIn{$\TgtAll$ and $\SCovMat$}
    \vspace{0.05cm}
    \KwOut{$\widehat{\AoaVec} = [\widehat{\Aoa}_1 \ \dots  \ \widehat{\Aoa}_{\EstVal{\TgtAll}}]$}

    \Begin{
    \textbf{1. \Acrshort{evd}:} Compute eigenvalues/eigenvectors of $\SCovMat$\;
    \textbf{2. Signal/Noise partition:} $\SCovMat = \SigSpaceMat \SigEigMat \Hermitian{\SigSpaceMat} + \NoiseSpaceMat \NoiseEigMat \Hermitian{\NoiseSpaceMat}$\;
    \textbf{3. Pseudospectrum evaluation:} $\forall \widetilde{\Aoa} \in \Grid_\Aoa$,
    \newline \vspace{2pt} a) \gls{music}: $\PseudoSpec_{\NoiseSpaceMat}(\widetilde{\Aoa}) $ in \eqref{eq:pseudo_spectrum_noise} or $\PseudoSpec_{\SigSpaceMat}(\widetilde{\Aoa})$ in \eqref{eq:pseudo_spectrum_signal};
    \newline \vspace{2pt} b) \gls{wmusic}:$\PseudoSpec_{\NoiseEigMat\NoiseSpaceMat}(\widetilde{\Aoa})$ in \eqref{eq:pseudo_spectrum_noise_weighted} or $\PseudoSpec_{\SigEigMat\SigSpaceMat}(\widetilde{\Aoa})$ in \eqref{eq:pseudo_spectrum_signal_weighted}\; 
    \textbf{4. Peak selection:} $\widehat{\AoaVec} \gets \TgtAll$ largest peaks in $\PseudoSpec$\;
    }
\end{algorithm}

\subsubsection{Eigenvalue Decomposition}
The first step computes the \gls{evd} of the sample covariance matrix $\SCovMat$.

\subsubsection{Signal- and Noise-Subspace Partition} \label{sec:music_sig_noise_partition}
We partition $\SCovMat$ into signal and noise components using the computed eigenpairs.
The dimensionality of the signal subspace is typically set to match the number $\TgtAll$ of targets.
The signal subspace $\SigSpaceMat \in \Cset^{\SRxAll \times \TgtAll}$ spans the $\TgtAll$ dominant eigenvectors, while the noise subspace $\NoiseSpaceMat \in \Cset^{\SRxAll \times (\SRxAll - \TgtAll)}$ spans the remaining eigenvectors.
Hence,
\begin{equation} \label{eq:SVD_SCovMat}
    \SCovMat = \SigSpaceMat \SigEigMat \Hermitian{\SigSpaceMat} + \NoiseSpaceMat \NoiseEigMat \Hermitian{\NoiseSpaceMat},
\end{equation}
where the diagonal signal and noise eigenvalue matrices are denoted as $\SigEigMat$ and $\NoiseEigMat$, respectively.

\subsubsection{Pseudospectrum Evaluation}
The \gls{music} pseudospectrum is evaluated over a search grid $\Grid_\Aoa$ of $\GridAll$ candidate \gls{doa} values.
To reduce complexity, we evaluate either a noise-subspace or a signal-subspace form, depending on the subspace dimensions.
When the signal subspace is larger ($\TgtAll > \SRxAll - \TgtAll$), we use $\NoiseSpaceMat \Hermitian{\NoiseSpaceMat}$; otherwise, we use $\SigSpaceMat \Hermitian{\SigSpaceMat}$.
The corresponding pseudo-spectra are
\begin{align} 
    \PseudoSpec_{\NoiseSpaceMat}(\CandVal{\Aoa}) 
    & = \frac{1}{\Hermitian{\SteerVec}(\CandVal{\Aoa}) \ \NoiseSpaceMat \Hermitian{\NoiseSpaceMat} \ \SteerVec(\CandVal{\Aoa})}
    = \left\lVert \Hermitian{\NoiseSpaceMat} \ \SteerVec(\CandVal{\Aoa}) \right\rVert_{2}^{-2}, \label{eq:pseudo_spectrum_noise} \\
    \PseudoSpec_{\SigSpaceMat}(\CandVal{\Aoa}) 
    & = \Hermitian{\SteerVec}(\CandVal{\Aoa}) \ \SigSpaceMat \Hermitian{\SigSpaceMat} \ \SteerVec(\CandVal{\Aoa}) 
    = \left\lVert \Hermitian{\SigSpaceMat} \ \SteerVec(\CandVal{\Aoa}) \right\rVert_{2}^{2}. \label{eq:pseudo_spectrum_signal}
\end{align}
Although $\PseudoSpec_{\NoiseSpaceMat}$ and $\PseudoSpec_{\SigSpaceMat}$ differ numerically, they share identical peak locations because
$\SigSpaceMat \Hermitian{\SigSpaceMat} + \NoiseSpaceMat \Hermitian{\NoiseSpaceMat} = \Identity_\SRxAll$ \cite{61541}.
For \gls{wmusic}, the pseudo-spectrum is defined similarly but with eigenvalue weighting:
\begin{align} 
    \PseudoSpec_{\NoiseEigMat\NoiseSpaceMat}(\CandVal{\Aoa}) 
    & = \frac{1}{\Hermitian{\SteerVec}(\CandVal{\Aoa}) \NoiseSpaceMat \NoiseEigMat \Hermitian{\NoiseSpaceMat} \SteerVec(\CandVal{\Aoa})}
    = \left\lVert \Hermitian{(\SqrtMat{\NoiseEigMat}\NoiseSpaceMat)} \ \SteerVec(\CandVal{\Aoa}) \right\rVert_{2}^{-2}, \label{eq:pseudo_spectrum_noise_weighted} \\
    \PseudoSpec_{\SigEigMat\SigSpaceMat}(\CandVal{\Aoa}) 
    & = \Hermitian{\SteerVec}(\CandVal{\Aoa}) \ \SigSpaceMat \SigEigMat \Hermitian{\SigSpaceMat} \ \SteerVec(\CandVal{\Aoa}) 
    = \left\lVert \Hermitian{(\SqrtMat{\SigEigMat}\SigSpaceMat)} \ \SteerVec(\CandVal{\Aoa}) \right\rVert_{2}^{2}. \label{eq:pseudo_spectrum_signal_weighted}
\end{align}
Since ${\SigEigMat}$ and ${\NoiseEigMat}$ are diagonal, their square-root matrices are obtained by taking the square root of each eigenvalue.
By contrast to standard \gls{music}, the weighted forms $\PseudoSpec_{\NoiseEigMat\NoiseSpaceMat}$ and $\PseudoSpec_{\SigEigMat\SigSpaceMat}$ do not necessarily peak at the same locations, because weighting breaks the complementary projection relation in \eqref{eq:SVD_SCovMat}.
As a result, these two weighted variants may perform differently depending on the scenario.

\subsubsection{Peak Selection}
We then identify the $\TgtAll$ most prominent peaks of the pseudo-spectrum to obtain
$\widehat{\AoaVec} = [\widehat{\Aoa}_1 \ \dots \ \widehat{\Aoa}_{\TgtAll}]$.
Optionally, gradient-ascent refinement at each detected peak mitigates grid quantization effects and enables off-grid localization in the continuous domain.


\subsection{Greedy Methods: \Acrshort{omp} and \Acrshort{ols}} \label{sec:OMP_OLS}
We now move from subspace estimation to greedy methods that approximate \gls{mle} at lower cost.
Both \gls{omp} and \gls{ols} estimate targets by selecting one \gls{doa} per iteration while leveraging previously selected \glspl{doa}.
We denote the estimated \gls{doa} set at iteration $k$ as $\GreedySet{\EstVal{\AoaVec}}{k} = [\EstVal{\Aoa}_1 \ \dots \ \EstVal{\Aoa}_k]$.
The projection operators onto the subspace generated by the columns of $\SteerMat(\GreedySet{\EstVal{\AoaVec}}{k})$, and onto its orthogonal complement, are defined as
\begin{align}
& \ProjMat(\GreedySet{\EstVal{\AoaVec}}{k}) = \AoaSteerMat(\GreedySet{\EstVal{\AoaVec}}{k}) \ \MPInverse{\AoaSteerMat}(\GreedySet{\EstVal{\AoaVec}}{k}), \\
& \CompProjMat(\GreedySet{\EstVal{\AoaVec}}{k}) = \Identity_\SRxAll - \ProjMat(\GreedySet{\EstVal{\AoaVec}}{k}). \label{eq:simo_ProjCompMat}
\end{align}
Algorithm~\ref{alg:greedy} summarizes the common framework of \gls{omp} and \gls{ols}, which we detail next.
\begin{algorithm}[t]
    \caption{\gls{omp} and \gls{ols}}
    \label{alg:greedy}
    \KwIn{$\TgtAll$ and $\SRxSigMat$ or $\SCovMat$}
    \vspace{0.05cm}
    \KwOut{$\widehat{\AoaVec}_{\TgtAll} = [\widehat{\Aoa}_1 \ \dots  \ \widehat{\Aoa}_{\TgtAll}]$}

    \Begin{
    \textbf{1. Initialization:} $\GreedySet{\EstVal{\AoaVec}}{0} \gets [~]$, $\CompProjMat \big(\GreedySet{\EstVal{\AoaVec}}{0}\big) \gets \Identity_\SRxAll$\;
        \For{$k\gets 0$ \KwTo $\TgtAll-1$}{
        	\textbf{2. Selection Step:} \newline \vspace{2pt} a) \gls{omp}: $\EstVal{\Aoa}_{\TgtInd+1} \gets $ \eqref{eq:simo_OMP_Y} or \eqref{eq:simo_OMP_R}; \newline \vspace{2pt}  b) \gls{ols}: $~\EstVal{\Aoa}_{\TgtInd+1} \gets $ \eqref{eq:simo_OLS_Y} or \eqref{eq:simo_OLS_R}\;
        	\textbf{3. Update Step:} $ \GreedySet{\EstVal{\AoaVec}}{\TgtInd+1} \gets [\GreedySet{\EstVal{\AoaVec}}{\TgtInd},\EstVal{\Aoa}_{\TgtInd+1}], \CompProjMat\big(\GreedySet{\EstVal{\AoaVec}}{\TgtInd+1}\big) \gets $ \eqref{eq:simo_ProjCompMat}\; 
    	}
    }
\end{algorithm}

\subsubsection{Initialization}
Both algorithms initialize with an empty \gls{doa} set $\GreedySet{\EstVal{\AoaVec}}{0}$ and target counter $k=0$.
At initialization, $\CompProjMat(\GreedySet{\EstVal{\AoaVec}}{0}) = \Identity_\SRxAll$, which spans the full observation space.
The algorithm then alternates between selection and update steps until $\TgtAll$ targets are estimated.

\subsubsection{Selection Step}
To express the selection criterion, we define the residual observation and covariance matrices as
\begin{equation}
\SRxSigMat_{\TgtInd} = \ProjMat^{\Comp}(\GreedySet{\EstVal{\AoaVec}}{k}) \ \SRxSigMat \quad \text{and} \quad 
\GreedySet{\SCovMat}{\TgtInd} = \ProjMat^{\Comp}(\GreedySet{\EstVal{\AoaVec}}{k}) \ \SCovMat \ \ProjMat^{\Comp}(\GreedySet{\EstVal{\AoaVec}}{k}).
\end{equation}
Each iteration selects the next target \gls{doa} by solving a one-dimensional optimization problem over $\Grid_\Aoa$.
Similarly to \gls{music}, optional gradient-ascent refinement can be used for off-grid estimation.
The \gls{omp} and \gls{ols} methods differ in the objective function, as detailed below. 
\gls{omp} selects the next target \gls{doa} $\EstVal{\Aoa}_{\TgtInd+1}$ by maximizing the correlation between the steering vector and the residual observation matrix
\begin{align}
\EstVal{\Aoa}_{\TgtInd+1}^{\text{\Acrshort{omp}}} 
    & = \argmax_{\CandVal{\Aoa}_{\TgtInd+1}} \left\lVert \Hermitian{\SRxSigMat}_{\TgtInd} \ \SteerVec(\CandVal{\Aoa}_{\TgtInd+1})  \right\rVert_2^2, \numberthis \label{eq:simo_OMP_Y} \\
    & = \argmax_{\CandVal{\Aoa}_{\TgtInd+1}} \Hermitian{\SteerVec}(\CandVal{\Aoa}_{\TgtInd+1}) \ \GreedySet{\SCovMat}{\TgtInd} \ \SteerVec(\CandVal{\Aoa}_{\TgtInd+1}), \numberthis \label{eq:simo_OMP_R} \\
    & = \argmax_{\CandVal{\Aoa}_{\TgtInd+1}} \left\lVert \Hermitian{(\SqrtMat{\SCovMat}_{\TgtInd})} \ \SteerVec(\CandVal{\Aoa}_{\TgtInd+1})  \right\rVert_2^2. \numberthis \label{eq:simo_OMP_Rsqrt}
\end{align}
Again, all three forms are equivalent, but the last one is less conventional in the literature. 
It is expressed in terms of the square-root residual sample covariance matrix, defined as
\begin{equation} \label{eq:SqrtResMat}
\SqrtMat{\SCovMat}_{\TgtInd}=\CompProjMat(\GreedySet{\EstVal{\AoaVec}}{k})\ \SqrtMat{\SCovMat}, \in \Cset^{\SRxAll \times \SRxAll}. 
\end{equation}
To avoid redundant computations, we compute $\SqrtMat{\SCovMat}$ once before the iterations and then reuse it for every update of $\SqrtMat{\SCovMat}_{\TgtInd}$ from \eqref{eq:SqrtResMat}.
The \gls{omp} selection problem is equivalent to solving a univariate \gls{mle} problem on the residual observation matrix $\SRxSigMat_{\TgtInd}$ or sample covariance matrix $\SCovMat_{\TgtInd}$.
By contrast, \gls{ols} selects $\EstVal{\Aoa}_{\TgtInd+1}$ by minimizing the least-squares error at each iteration:
\begin{align}
    \EstVal{\Aoa}_{\TgtInd+1}^{\text{\Acrshort{ols}}} 
    & = \argmax_{\CandVal{\Aoa}_{\TgtInd+1}} \left \lVert \Hermitian{\SRxSigMat} \ \ProjMat\big(\big[\GreedySet{\EstVal{\AoaVec}}{\TgtInd},\CandVal{\Aoa}_{\TgtInd+1} \big]\big) \right \rVert_\Frob^2, \numberthis \label{eq:simo_OLS_Y_slow} \\
    & = \argmax_{\CandVal{\Aoa}_{\TgtInd+1}} \Tr\left[ \SCovMat \ \ProjMat\big(\big[ \GreedySet{\EstVal{\AoaVec}}{\TgtInd},\CandVal{\Aoa}_{\TgtInd+1} \big]\big)  \right], \numberthis \label{eq:simo_OLS_R_slow} \\
    & = \argmax_{\CandVal{\Aoa}_{\TgtInd+1}} \left\lVert \Hermitian{(\SqrtMat{\SCovMat}_{\TgtInd})} \ \ProjMat\big(\big[\GreedySet{\EstVal{\AoaVec}}{\TgtInd},\CandVal{\Aoa}_{\TgtInd+1} \big]\big)   \right\rVert_\Frob^2. \numberthis \label{eq:simo_OLS_Rsqrt_slow}
\end{align}
These formulations are equivalent to a $(k+1)$-dimensional \gls{mle} constrained by the previously selected set $\GreedySet{\EstVal{\AoaVec}}{\TgtInd}$, so that only the candidate \gls{doa} $\EstVal{\Aoa}_{\TgtInd+1}$ is searched over one dimension.
Intuitively, \gls{ols} jointly reoptimizes the candidate \gls{doa} and the modulated channel coefficients of all previously selected \glspl{doa} plus the candidate, whereas \gls{omp} optimizes only the candidate \gls{doa} and its corresponding coefficient.
This joint optimization improves estimation accuracy but drastically increases computationalcomplexity.
To alleviate this burden, previous works \cite{7543,10979421} have leveraged a projection-matrix decomposition demonstrating that
\begin{equation} \label{eq:proj_mat_decomposition}
\ProjMat\big(\big[\GreedySet{\EstVal{\AoaVec}}{\TgtInd},\CandVal{\Aoa}_{\TgtInd+1} \big]\big)
= \ProjMat\big(\GreedySet{\EstVal{\AoaVec}}{\TgtInd}\big) + \frac{\SteerVec_\TgtInd(\CandVal{\Aoa}_{\TgtInd+1}) \ \Hermitian{\SteerVec}_\TgtInd(\CandVal{\Aoa}_{\TgtInd+1})}{\lVert \SteerVec_\TgtInd(\CandVal{\Aoa}_{\TgtInd+1}) \rVert_2^2}, 
\end{equation} 
where $\SteerVec_\TgtInd(\CandVal{\Aoa}_{\TgtInd+1}) = \CompProjMat\big(\GreedySet{\EstVal{\AoaVec}}{\TgtInd}\big) \ \SteerVec(\CandVal{\Aoa}_{\TgtInd+1})$.
By substituting \eqref{eq:proj_mat_decomposition} into \eqref{eq:simo_OLS_Y_slow}, \eqref{eq:simo_OLS_R_slow}, and \eqref{eq:simo_OLS_Rsqrt_slow}, and then removing terms independent of $\CandVal{\Aoa}_{\TgtInd+1}$, we obtain the following efficient forms
\begin{align}
\EstVal{\Aoa}_{\TgtInd+1}^{\text{\Acrshort{ols}}}
    & = \argmax_{\CandVal{\Aoa}_{\TgtInd+1}} \frac{\lVert \Hermitian{\SRxSigMat}_{\TgtInd} \ \SteerVec(\CandVal{\Aoa}_{\TgtInd+1})  \rVert_2^2}{\lVert \SteerVec_\TgtInd(\CandVal{\Aoa}_{\TgtInd+1}) \rVert_2^2}, \label{eq:simo_OLS_Y} \\
    & = \argmax_{\CandVal{\Aoa}_{\TgtInd+1}} \ 
    \frac{\Hermitian{\SteerVec}(\CandVal{\Aoa}_{\TgtInd+1}) \ \GreedySet{\SCovMat}{\TgtInd} \ \SteerVec(\CandVal{\Aoa}_{\TgtInd+1})}{\lVert \SteerVec_\TgtInd(\CandVal{\Aoa}_{\TgtInd+1}) \rVert_2^2}, \label{eq:simo_OLS_R} \\
    & = \argmax_{\CandVal{\Aoa}_{\TgtInd+1}} \ 
    \frac{\lVert \Hermitian{(\SqrtMat{\SCovMat}_{\TgtInd})} \ \SteerVec(\CandVal{\Aoa}_{\TgtInd+1})  \rVert_2^2}{\lVert \SteerVec_\TgtInd(\CandVal{\Aoa}_{\TgtInd+1}) \rVert_2^2}. \label{eq:simo_OLS_Rsqrt}
\end{align}
The denominator captures the projection of the steering vector onto the orthogonal complement of previously selected angles, thereby normalizing the \gls{omp}-like numerator.

\subsubsection{Update Step}
After each selection, we update the angle set and the orthogonal-complement projection matrix to include the newly estimated angle $\EstVal{\Aoa}_{\TgtInd+1}$.
This update defines the residual observation space used in the next iteration.

\section{\acrlong{gimusic} Algorithms} \label{sec:imusic}
This section introduces the proposed \gls{gimusic} algorithms for multi-target \gls{doa} estimation.
These algorithms integrate \gls{music}-based subspace pseudo-spectrum evaluations into the greedy \gls{omp} and\gls{ols} frameworks to improve and accelerate target-\gls{doa} selection.
As in Section~\ref{sec:state_of_the_art}, we assume that the number of targets $\TgtAll$ is known; the discussion about practical model-order selection is deferred to Section~\ref{sec:simulations}.

We first present a unified framework that links subspace estimation and greedy optimization, thereby motivating the proposed algorithms.
We then detail the processing steps of \gls{ompimusic} and \gls{olsimusic}, along with their weighted \gls{imusic} variants.

\subsection{Unified Subspace and Greedy Estimation Framework}
This link relies on reformulating the \gls{omp} and \gls{ols} selection steps using a \gls{music}-like signal-noise subspaces decomposition of $\SqrtMat{\SCovMat}_{\TgtInd}$. 
The resulting expressions are stated in Lemma~\ref{lemma:greedy_proj_sqrt}, in which the residual signal and noise subspace matrices at iteration $\TgtInd$ are defined as
\begin{equation}
\SigSpaceMat_{\TgtInd} = \CompProjMat(\GreedySet{{\AoaVec}}{\TgtInd}) \ \SigSpaceMat, \quad \NoiseSpaceMat_{\TgtInd} = \CompProjMat(\GreedySet{{\AoaVec}}{\TgtInd}) \ \NoiseSpaceMat. \label{eq:residual_subspaces}
\end{equation}
\begin{lemma} \label{lemma:greedy_proj_sqrt}
The \gls{omp} and \gls{ols} selection steps in \eqref{eq:simo_OMP_Rsqrt} and \eqref{eq:simo_OLS_Rsqrt} can be equivalently expressed as a function of the residual signal and noise subspaces as
\begin{align}
    \EstVal{\Aoa}_{\TgtInd+1}^{\textup{\Acrshort{omp}}} 
    = \argmax_{\CandVal{\Aoa}_{\TgtInd+1}} 
    \big \lVert \Hermitian{\big(\SigSpaceMat_{\TgtInd} \SqrtMat{{\SigEigMat}} + \NoiseSpaceMat_{\TgtInd}  \SqrtMat{{\NoiseEigMat}} \big)} \ \SteerVec(\CandVal{\Aoa}_{\TgtInd+1}) \big \rVert_{2}^2, \label{eq:omp_partition_evd} \\
    \EstVal{\Aoa}_{\TgtInd+1}^{\textup{\Acrshort{ols}}} 
    = \argmax_{\CandVal{\Aoa}_{\TgtInd+1}} \ 
    \frac{\big \lVert \Hermitian{\big(\SigSpaceMat_{\TgtInd} \SqrtMat{{\SigEigMat}} + \NoiseSpaceMat_{\TgtInd}  \SqrtMat{{\NoiseEigMat}} \big)} \ \SteerVec(\CandVal{\Aoa}_{\TgtInd+1}) \big \rVert_{2}^2}{\big\lVert \SteerVec_\TgtInd(\CandVal{\Aoa}_{\TgtInd+1}) \big\rVert_2^2}. \label{eq:ols_partition_evd}
\end{align}
\end{lemma}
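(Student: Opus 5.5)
The plan is to choose a specific square root $\SqrtMat{\SCovMat}$ in the square-root forms \eqref{eq:simo_OMP_Rsqrt} and \eqref{eq:simo_OLS_Rsqrt}. Both criteria depend on $\SqrtMat{\SCovMat}$ only through $\SqrtMat{\SCovMat}_{\TgtInd}=\CompProjMat(\GreedySet{\EstVal{\AoaVec}}{k})\,\SqrtMat{\SCovMat}$. Their values are also invariant to the choice of square root: for any $\mathbf{v}$,
\begin{equation*}
\lVert \Hermitian{(\SqrtMat{\SCovMat}_{\TgtInd})}\mathbf{v}\rVert_2^2=\Hermitian{\mathbf{v}}\,\CompProjMat\,\SqrtMat{\SCovMat}\,\Hermitian{(\SqrtMat{\SCovMat})}\,\CompProjMat\,\mathbf{v}=\Hermitian{\mathbf{v}}\,\GreedySet{\SCovMat}{\TgtInd}\,\mathbf{v},
\end{equation*}
where we use that $\CompProjMat$ is Hermitian. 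This is exactly why \eqref{eq:simo_OMP_R} and \eqref{eq:simo_OMP_Rsqrt} coincide. So any matrix whose outer product equals $\SCovMat$ may be substituted.

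From the \gls{evd} partition \eqref{eq:SVD_SCovMat}, I will take the block matrix $\SqrtMat{\SCovMat}=[\SigSpaceMat\SqrtMat{\SigEigMat}\ \ \NoiseSpaceMat\SqrtMat{\NoiseEigMat}]\in\Cset^{\SRxAll\times\SRxAll}$. Since $\SigEigMat$ and $\NoiseEigMat$ are real, nonnegative and diagonal, we have $\SqrtMat{\SigEigMat}\Hermitian{(\SqrtMat{\SigEigMat})}=\SigEigMat$, and likewise for $\NoiseEigMat$. Hence $\SqrtMat{\SCovMat}\Hermitian{(\SqrtMat{\SCovMat})}=\SigSpaceMat\SigEigMat\Hermitian{\SigSpaceMat}+\NoiseSpaceMat\NoiseEigMat\Hermitian{\NoiseSpaceMat}=\SCovMat$, so this is a valid square root. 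Applying $\CompProjMat$ and using definition \eqref{eq:residual_subspaces} gives
\begin{equation*}
\SqrtMat{\SCovMat}_{\TgtInd}=[\SigSpaceMat_{\TgtInd}\SqrtMat{\SigEigMat}\ \ \NoiseSpaceMat_{\TgtInd}\SqrtMat{\NoiseEigMat}],
\end{equation*}
where the projector in \eqref{eq:residual_subspaces} is understood at the estimated set $\GreedySet{\EstVal{\AoaVec}}{\TgtInd}$.

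The numerator of both criteria therefore becomes
\begin{equation*}
\lVert\Hermitian{(\SigSpaceMat_{\TgtInd}\SqrtMat{\SigEigMat})}\SteerVec\rVert_2^2+\lVert\Hermitian{(\NoiseSpaceMat_{\TgtInd}\SqrtMat{\NoiseEigMat})}\SteerVec\rVert_2^2 .
\end{equation*}
This equals the squared norm of the stacked vector $\Hermitian{\big(\SqrtMat{\SCovMat}_{\TgtInd}\big)}\SteerVec$. It is also what the compact notation $\Hermitian{\big(\SigSpaceMat_{\TgtInd}\SqrtMat{\SigEigMat}+\NoiseSpaceMat_{\TgtInd}\SqrtMat{\NoiseEigMat}\big)}\SteerVec$ in \eqref{eq:omp_partition_evd} is meant to represent, i.e. the signal and noise contributions combined, which I read as a concatenation. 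Substituting this numerator into \eqref{eq:simo_OMP_Rsqrt} gives \eqref{eq:omp_partition_evd}. Substituting it into \eqref{eq:simo_OLS_Rsqrt}, whose denominator $\lVert\SteerVec_\TgtInd\rVert_2^2$ is unchanged, gives \eqref{eq:ols_partition_evd}.

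The main point needing care is this interpretation of the sum. A literal matrix sum would require equal dimensions, and it would create cross terms $\Hermitian{\SteerVec}\SigSpaceMat_{\TgtInd}\SqrtMat{\SigEigMat}\SqrtMat{\NoiseEigMat}\Hermitian{\NoiseSpaceMat_{\TgtInd}}\SteerVec$. These cross terms do not vanish after projection, because $\Hermitian{\SigSpaceMat}\CompProjMat\NoiseSpaceMat\neq\mathbf{0}$ in general. So I will state explicitly that the expression denotes the juxtaposition. The rest is routine substitution.
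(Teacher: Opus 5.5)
Your proposal is correct and follows the paper's route. The paper's one-line proof likewise substitutes the EVD-based square root $\SqrtMat{\SCovMat}=\SigSpaceMat\SqrtMat{\SigEigMat}+\NoiseSpaceMat\SqrtMat{\NoiseEigMat}$ into \eqref{eq:simo_OMP_Rsqrt} and \eqref{eq:simo_OLS_Rsqrt}. Your square-root-invariance check and your reading of that ``sum'' as the $\SRxAll\times\SRxAll$ concatenation are what make it rigorous. That concatenation equals the Hermitian root $\SigSpaceMat\SqrtMat{\SigEigMat}\Hermitian{\SigSpaceMat}+\NoiseSpaceMat\SqrtMat{\NoiseEigMat}\Hermitian{\NoiseSpaceMat}$ times the unitary $[\SigSpaceMat\ \NoiseSpaceMat]$, and it is in that Hermitian form that the paper's appeal to orthogonality is actually needed.

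One small correction to your side remark. The cross term of a literal sum is the inner product of $\SqrtMat{\SigEigMat}\Hermitian{\SigSpaceMat}_{\TgtInd}\SteerVec$ and $\SqrtMat{\NoiseEigMat}\Hermitian{\NoiseSpaceMat}_{\TgtInd}\SteerVec$ in coordinate space, and it is generically nonzero even without any projection. So $\Hermitian{\SigSpaceMat}\CompProjMat\NoiseSpaceMat\neq\mathbf{0}$ is not the reason that term fails to vanish.
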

\begin{proof}
From \eqref{eq:SVD_SCovMat} and because the signal and noise subspaces are orthogonal, $\SqrtMat{{\SCovMat}} = {\SigSpaceMat} \SqrtMat{{\SigEigMat}} + {\NoiseSpaceMat}  \SqrtMat{{\NoiseEigMat}}$.  
\end{proof}

Building on Lemma~\ref{lemma:greedy_proj_sqrt}, we approximate the common expressions of the \gls{omp} objective function in \eqref{eq:omp_partition_evd} and of the numerator of the \gls{ols} objective function in \eqref{eq:ols_partition_evd} through a low-rank argument as follows
\begin{align*}
& \big \lVert \Hermitian{\big(\SigSpaceMat_{\TgtInd} \SqrtMat{{\SigEigMat}} + \NoiseSpaceMat_{\TgtInd}  \SqrtMat{{\NoiseEigMat}} \big)} \ \SteerVec(\CandVal{\Aoa}_{\TgtInd+1}) \big \rVert_{2}^2 \\
& \stackrel{(a)}{=} \big \lVert \Hermitian{(\SigSpaceMat_\TgtInd \SqrtMat{\SigEigMat})}  \ \SteerVec (\CandVal{\Aoa}_{\TgtInd+1})\big \rVert_2^2 + \big \lVert \Hermitian{(\NoiseSpaceMat_\TgtInd \SqrtMat{\NoiseEigMat})}  \ \SteerVec (\CandVal{\Aoa}_{\TgtInd+1})\big \rVert_2^2,\\
& \stackrel{(b)}{\approx} \big \lVert \Hermitian{(\SigSpaceMat_\TgtInd \SqrtMat{\SigEigMat})}  \ \SteerVec (\CandVal{\Aoa}_{\TgtInd+1})\big \rVert_2^2 + 0. \numberthis \label{eq:omp_wmusic}
\end{align*}    
Step (a) follows from the orthogonality of the signal and noise subspaces, which cancels the cross terms when the norm in \eqref{eq:omp_partition_evd} is expanded, and step (b) neglects the noise-subspace term. 
This is justified because noise-subspace eigenvalues are typically much smaller than signal-subspace eigenvalues and because target steering vectors have near-zero projection onto the noise subspace.
Consequently, evaluating the resulting expression in \eqref{eq:omp_wmusic} for all candidate \gls{doa} at each step $\TgtInd$ corresponds to a \gls{wmusic} pseudospectrum evaluation, as in \eqref{eq:pseudo_spectrum_signal_weighted}, but based on the residual signal subspace.

\subsection{Proposed \acrshort{gimusic} Selection Step}
From this derivation, we can define two weighted variants of the proposed \gls{gimusic} algorithms, which we refer to as \gls{ompiwmusic} and \gls{olsiwmusic}, by applying the low-rank approximation in \eqref{eq:omp_wmusic} to the objective functions of \gls{omp} and \gls{ols} in \eqref{eq:omp_partition_evd} and \eqref{eq:ols_partition_evd}. 

These algorithms can be interpreted as greedy methods that, at each iteration, use a \gls{wmusic} pseudospectrum evaluation based on the residual signal subspace to select the next target \gls{doa}. 
Their resulting selection rules are stated in Proposition~\ref{prop:iwmusic_omp_ols}, while their practical implementation steps are detailed in Section~\ref{sec:imusic_practical_steps}.

\begin{proposition} \label{prop:iwmusic_omp_ols}
The \gls{ompiwmusic} and \gls{olsiwmusic} selection steps are respectively given by
\begin{align}
    \EstVal{\Aoa}_{\TgtInd+1}^{\hspace{0.20em} \textup{\acrshort{ompiwmusic}}} 
    & = \argmax_{\CandVal{\Aoa}_{\TgtInd+1}} \ \big \lVert \Hermitian{(\SigSpaceMat_{\TgtInd} \SqrtMat{{\SigEigMat}})} \ \SteerVec(\CandVal{\Aoa}_{\TgtInd+1}) \big \rVert_2^2, \label{eq:simo_OMP_iwmusic_Signal} \\
    \EstVal{\Aoa}_{\TgtInd+1}^{\hspace{0.20em} \textup{\acrshort{olsiwmusic}}} 
    & = \argmax_{\CandVal{\Aoa}_{\TgtInd+1}} \ 
    \frac{\big \lVert \Hermitian{(\SigSpaceMat_{\TgtInd} \SqrtMat{{\SigEigMat}})} \ \SteerVec(\CandVal{\Aoa}_{\TgtInd+1}) \big \rVert_2^2}{\big\lVert \SteerVec_\TgtInd(\CandVal{\Aoa}_{\TgtInd+1}) \big\rVert_2^2}. \label{eq:simo_OLS_iwmusic_Signal}
\end{align}
These selection steps can also be expressed in the residual noise-subspace form as in \eqref{eq:pseudo_spectrum_noise_weighted}.
However, as discussed for \gls{wmusic}, the signal-subspace and noise-subspace forms are not equivalent.
\end{proposition}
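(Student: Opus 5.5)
Proposition~\ref{prop:iwmusic_omp_ols} is essentially a definitional consequence of Lemma~\ref{lemma:greedy_proj_sqrt} combined with the low-rank approximation \eqref{eq:omp_wmusic}, so the plan is to make that chain explicit and then treat the closing remark about the noise-subspace form separately.

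\textbf{Step 1: expand the common numerator.} Start from the OMP criterion \eqref{eq:omp_partition_evd} and the OLS criterion \eqref{eq:ols_partition_evd}. Both share the same term, $\lVert \Hermitian{(\SigSpaceMat_{\TgtInd} \SqrtMat{\SigEigMat} + \NoiseSpaceMat_{\TgtInd} \SqrtMat{\NoiseEigMat})} \SteerVec(\CandVal{\Aoa}_{\TgtInd+1})\rVert_2^2$, as objective and as numerator respectively. I would expand this norm. The vector is stacked as $\Hermitian{(\SigSpaceMat_{\TgtInd} \SqrtMat{\SigEigMat})}\SteerVec$ plus $\Hermitian{(\NoiseSpaceMat_{\TgtInd} \SqrtMat{\NoiseEigMat})}\SteerVec$. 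These live in coordinates indexed by different columns of the square-root factor: $\SqrtMat{\SCovMat} = [\SigSpaceMat\SqrtMat{\SigEigMat}\ \ \NoiseSpaceMat\SqrtMat{\NoiseEigMat}]$ up to column ordering. With that reading, the squared norm splits exactly into the sum in step (a) of \eqref{eq:omp_wmusic}. This sidesteps any worry about $\CompProjMat$ destroying the orthogonality of $\SigSpaceMat_\TgtInd$ and $\NoiseSpaceMat_\TgtInd$.

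\textbf{Step 2: drop the noise term.} Apply step (b), i.e. neglect the noise-subspace term, which is justified by the small noise eigenvalues and the near-orthogonality of steering vectors to the noise subspace. Substituting the retained term into \eqref{eq:omp_partition_evd} gives \eqref{eq:simo_OMP_iwmusic_Signal}. In \eqref{eq:ols_partition_evd} the denominator $\lVert\SteerVec_\TgtInd(\CandVal{\Aoa}_{\TgtInd+1})\rVert_2^2$ is untouched by the approximation, so replacing only the numerator yields \eqref{eq:simo_OLS_iwmusic_Signal}. Since the statement \emph{defines} the OMP-iWMUSIC and OLS-iWMUSIC selection rules as these approximated criteria, this completes the main part.

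\textbf{Step 3: the noise-subspace remark.} This is the step I expect to be the main obstacle, since it needs an honest argument rather than bookkeeping. Mirroring \eqref{eq:pseudo_spectrum_noise_weighted}, the noise-form selections would maximize $\lVert\Hermitian{(\NoiseSpaceMat_\TgtInd\SqrtMat{\NoiseEigMat})}\SteerVec\rVert_2^{-2}$, possibly with the OLS normalization. Non-equivalence can be shown by two routes:
\begin{itemize}
\item \emph{Structural route.} Note that $\SigSpaceMat\SigEigMat\Hermitian{\SigSpaceMat}+\NoiseSpaceMat\NoiseEigMat\Hermitian{\NoiseSpaceMat}=\SCovMat\neq c\,\Identity_\SRxAll$ in general. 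After projecting by $\CompProjMat$ on both sides, the sum of the two quadratic forms therefore still depends on $\CandVal{\Aoa}$, so maximizing one is not equivalent to minimizing the other.
\item \emph{Counterexample route.} If a rigorous version is needed, build a small example, e.g. $\SRxAll=2$ with unequal noise and signal weights, where the two argmax sets differ.
\end{itemize}
The contrast with unweighted MUSIC, where $\SigSpaceMat\Hermitian{\SigSpaceMat}+\NoiseSpaceMat\Hermitian{\NoiseSpaceMat}=\Identity_\SRxAll$ restores equivalence, should be emphasized as the reason the weighting breaks it.
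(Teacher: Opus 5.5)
Your Steps~1--2 are correct and follow the paper's argument. The paper also obtains both rules from Lemma~\ref{lemma:greedy_proj_sqrt} through the exact split (a) and the approximation (b) in \eqref{eq:omp_wmusic}, leaving the OLS denominator untouched, so the rules are essentially definitions. Reading $\mathbf{R}^{1/2}$ as the concatenation $[\mathbf{U}\boldsymbol{\Lambda}^{1/2}\ \mathbf{G}\boldsymbol{\Sigma}^{1/2}]$ is a clean way to justify (a). For any valid root the quantity equals $\mathbf{a}^{\dagger}\mathbf{P}^{\perp}\mathbf{R}\mathbf{P}^{\perp}\mathbf{a}$, so the split is exact regardless and never needs $\mathbf{U}_k\perp\mathbf{G}_k$.

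The gap is in Step~3. Your structural route only shows that the signal and noise objectives are not affinely complementary, since their sum still depends on $\widetilde{\theta}$. That does not imply their maximizers differ, because two objectives can be related by a non-affine monotone map. It is the same heuristic the paper gives (``weighting breaks the complementary projection relation''), so you are not below the paper, but it is not a proof.

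Your proposed counterexample also fails as stated. For $M=2$ (so $K=1$) at $k=0$, both subspaces are one-dimensional and the weights are just scalars. Moreover $|\mathbf{u}^{\dagger}\mathbf{a}|^2+|\mathbf{g}^{\dagger}\mathbf{a}|^2=\lVert\mathbf{a}\rVert_2^2=M$. Hence $\lambda|\mathbf{u}^{\dagger}\mathbf{a}|^2$ and $(\varrho|\mathbf{g}^{\dagger}\mathbf{a}|^2)^{-1}$ are both increasing in $|\mathbf{u}^{\dagger}\mathbf{a}|^2$ and select the same angle; the OLS denominator is constant at $k=0$, so OLS gives no difference either.

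Weighting can matter only when a subspace of dimension at least two carries distinct eigenvalues. For example, take $M=3$, $\mathbf{u}_1\propto\mathbf{a}(\theta_1)$, $\mathbf{a}(\theta_2)\in\mathrm{span}\{\mathbf{u}_1,\mathbf{u}_2\}$, and $\lambda_1>\lambda_2>\varrho$. Then the noise form is unbounded at both $\theta_1$ and $\theta_2$, whereas the weighted signal form has the unique maximizer $\theta_1$.

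Be careful with the reciprocal noise form for $k\geq 1$. We have $\mathbf{G}_k^{\dagger}\mathbf{a}=\mathbf{G}^{\dagger}\mathbf{P}^{\perp}\mathbf{a}$ and $\mathbf{P}^{\perp}\mathbf{a}(\widehat{\theta}_j)=\mathbf{0}$ at every previously selected $\widehat{\theta}_j$. So $\lVert(\mathbf{G}_k\boldsymbol{\Sigma}^{1/2})^{\dagger}\mathbf{a}\rVert_2^{-2}$ is infinite there even without weights, because the complementary identity is now $\mathbf{U}_k\mathbf{U}_k^{\dagger}+\mathbf{G}_k\mathbf{G}_k^{\dagger}=\mathbf{P}^{\perp}$ rather than $\mathbf{I}$. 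Non-equivalence is then trivial and not caused by the weighting. This is why Proposition~\ref{prop:imusic_omp_ols} uses the subtractive/normalized noise form, and why your ``weighting is the reason'' explanation is only clean at $k=0$.
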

In contrast, empirical evidence suggests that unweighted \gls{music} often outperforms its weighted counterpart \cite{61541}.
Motivated by this observation, we further simplify \eqref{eq:omp_wmusic} as follows
\begin{align*}
& \big \lVert \Hermitian{(\SigSpaceMat_\TgtInd \SqrtMat{\SigEigMat})}  \ \SteerVec (\CandVal{\Aoa}_{\TgtInd+1})\big \rVert_2^2 
\stackrel{(c)}{\approx} \big \lVert \Hermitian{\SigSpaceMat}_\TgtInd \  \SteerVec (\CandVal{\Aoa}_{\TgtInd+1})\big \rVert_2^2, \numberthis \label{eq:omp_music_Sig} \\
& \hspace{60pt} \stackrel{(d)}{=} \big\lVert \SteerVec_\TgtInd(\CandVal{\Aoa}_{\TgtInd+1}) \big\rVert_2^2 - \big \lVert \Hermitian{\NoiseSpaceMat}_\TgtInd \  \SteerVec (\CandVal{\Aoa}_{\TgtInd+1})\big \rVert_2^2. \numberthis \label{eq:omp_music_Noise}
\end{align*}
Step (c) removes the signal-eigenvalue weighting, and step (d) rewrites the resulting expression in terms of the residual noise subspace by applying 
$\SigSpaceMat_\TgtInd \Hermitian{\SigSpaceMat}_\TgtInd = \CompProjMat(\GreedySet{\EstVal{\AoaVec}}{k}) - \NoiseSpaceMat_\TgtInd \Hermitian{\NoiseSpaceMat}_\TgtInd,$ and then expanding the norm.
Observe now that evaluating the resulting expression in \eqref{eq:omp_music_Sig} for all candidate \gls{doa}, at each step $\TgtInd$, corresponds to a \gls{music} pseudospectrum evaluation, as in \eqref{eq:pseudo_spectrum_signal_weighted}, but based on the residual signal subspace.
Similarly, the second term in \eqref{eq:omp_music_Noise} corresponds to a \gls{music} pseudospectrum evaluation based on the residual noise subspace, as in \eqref{eq:pseudo_spectrum_noise_weighted}.

Consequently, we can define two additional variants of the proposed \gls{gimusic} algorithms, which we refer to as \gls{ompimusic} and \gls{olsimusic}, by applying the unweighted low-rank approximation in \eqref{eq:omp_music_Sig} or \eqref{eq:omp_music_Noise}. 

These algorithms can be interpreted as greedy methods that, at each iteration, use a \gls{music} pseudospectrum evaluation based on either the residual signal or noise subspace to select the next target \gls{doa}.
Their resulting selection rules are stated in Proposition~\ref{prop:imusic_omp_ols}, while their practical implementation steps are detailed next.
\begin{proposition} \label{prop:imusic_omp_ols}
The \gls{ompimusic} and \gls{olsimusic} selection steps based on the residual signal and noise subspace are respectively defined as
\begin{align} 
    \EstVal{\Aoa}_{\TgtInd+1}^{\textup{\acrshort{ompimusic}}} 
    & = \argmax_{\CandVal{\Aoa}_{\TgtInd+1}} \ \big \lVert \Hermitian{\SigSpaceMat}_{\TgtInd} \ \SteerVec(\CandVal{\Aoa}_{\TgtInd+1}) \big \rVert_2^2, \label{eq:simo_OMP_imusic_Signal} \\ 
     = & \argmax_{\CandVal{\Aoa}_{\TgtInd+1}} \big\lVert \SteerVec_\TgtInd(\CandVal{\Aoa}_{\TgtInd+1}) \big\rVert_2^2 - \big \lVert \Hermitian{\NoiseSpaceMat}_{\TgtInd} \SteerVec(\CandVal{\Aoa}_{\TgtInd+1}) \big \rVert_2^2, \label{eq:simo_OMP_imusic_Noise} \\
    \EstVal{\Aoa}_{\TgtInd+1}^{ \textup{\acrshort{olsimusic}}} & = \argmax_{\CandVal{\Aoa}_{\TgtInd+1}} \ 
    \frac{\big \lVert \Hermitian{\SigSpaceMat}_{\TgtInd} \ \SteerVec(\CandVal{\Aoa}_{\TgtInd+1}) \big \rVert_2^2}{\big\lVert \SteerVec_\TgtInd(\CandVal{\Aoa}_{\TgtInd+1}) \big\rVert_2^2}, \label{eq:simo_OLS_imusic_Signal} \\
    & = \argmax_{\CandVal{\Aoa}_{\TgtInd+1}} \ 1- 
    \frac{\big \lVert \Hermitian{\NoiseSpaceMat}_{\TgtInd} \ \SteerVec(\CandVal{\Aoa}_{\TgtInd+1}) \big \rVert_2^2}{\big\lVert \SteerVec_\TgtInd(\CandVal{\Aoa}_{\TgtInd+1}) \big\rVert_2^2}. \label{eq:simo_OLS_imusic_Noise}
\end{align}
The signal and noise subspace forms are equivalent in the sense that they yield the same selection result.
However, the noise-subspace form in \eqref{eq:simo_OMP_imusic_Noise} is less attractive in practice because it introduces an additional operation for the first term compared with \eqref{eq:simo_OMP_imusic_Signal}.
\end{proposition}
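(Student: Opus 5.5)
Proposition~\ref{prop:imusic_omp_ols} is essentially a definition; the only content to establish is that, for each of \gls{ompimusic} and \gls{olsimusic}, the signal-subspace and noise-subspace forms yield the same selection. The plan is to prove a single pointwise identity for every candidate $\CandVal{\Aoa}_{\TgtInd+1}$,
\begin{equation*}
\big\lVert \Hermitian{\SigSpaceMat}_{\TgtInd}\, \SteerVec(\CandVal{\Aoa}_{\TgtInd+1}) \big\rVert_2^2 = \big\lVert \SteerVec_\TgtInd(\CandVal{\Aoa}_{\TgtInd+1}) \big\rVert_2^2 - \big\lVert \Hermitian{\NoiseSpaceMat}_{\TgtInd}\, \SteerVec(\CandVal{\Aoa}_{\TgtInd+1}) \big\rVert_2^2 .
\end{equation*}
This makes the two \gls{ompimusic} objectives identical as functions, hence their argmax coincides. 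Dividing both sides by $\lVert \SteerVec_\TgtInd(\CandVal{\Aoa}_{\TgtInd+1}) \rVert_2^2$, for candidates where it is nonzero, turns \eqref{eq:simo_OLS_imusic_Signal} into exactly \eqref{eq:simo_OLS_imusic_Noise}.

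To prove the identity, abbreviate $\ProjMat^\perp_\TgtInd = \CompProjMat(\GreedySet{\EstVal{\AoaVec}}{\TgtInd})$ and use three facts. First, the full eigenbasis of the Hermitian matrix $\SCovMat$ is unitary, so $\SigSpaceMat\Hermitian{\SigSpaceMat}+\NoiseSpaceMat\Hermitian{\NoiseSpaceMat}=\Identity_\SRxAll$, as already invoked for standard \gls{music}. Second, $\ProjMat^\perp_\TgtInd$ is an orthogonal projector, so it is Hermitian and idempotent. Third, by \eqref{eq:residual_subspaces}, $\SigSpaceMat_\TgtInd\Hermitian{\SigSpaceMat}_\TgtInd + \NoiseSpaceMat_\TgtInd\Hermitian{\NoiseSpaceMat}_\TgtInd = \ProjMat^\perp_\TgtInd(\SigSpaceMat\Hermitian{\SigSpaceMat}+\NoiseSpaceMat\Hermitian{\NoiseSpaceMat})\ProjMat^\perp_\TgtInd = \ProjMat^\perp_\TgtInd$. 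Taking the quadratic form with $\SteerVec(\CandVal{\Aoa}_{\TgtInd+1})$ then gives
\begin{equation*}
\lVert \Hermitian{\SigSpaceMat}_\TgtInd \SteerVec \rVert_2^2 + \lVert \Hermitian{\NoiseSpaceMat}_\TgtInd \SteerVec \rVert_2^2 = \Hermitian{\SteerVec}\ProjMat^\perp_\TgtInd\SteerVec = \lVert \ProjMat^\perp_\TgtInd \SteerVec\rVert_2^2 = \lVert \SteerVec_\TgtInd \rVert_2^2 .
\end{equation*}
This is step (d) of \eqref{eq:omp_music_Noise}.

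The only real subtlety is the residual projector. In \eqref{eq:residual_subspaces} it is written with the true set $\GreedySet{\AoaVec}{\TgtInd}$, whereas the selection rule uses the estimated set. I will read both as the same projector $\CompProjMat(\GreedySet{\EstVal{\AoaVec}}{\TgtInd})$; the identity holds for any orthogonal projector, so nothing depends on this choice. One must also exclude candidates with $\SteerVec_\TgtInd(\CandVal{\Aoa}_{\TgtInd+1})=\mathbf{0}$, i.e.\ previously selected angles. Both \gls{ols}-type ratios are undefined there, and in the \gls{omp} forms such candidates give value $0$, so they are never strict maximizers unless the objective vanishes everywhere.

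The remark about practical cost then follows from counting operations. The noise form needs the extra norm $\lVert \SteerVec_\TgtInd \rVert_2^2$ on top of the residual-subspace projection, whereas the signal form of \gls{ompimusic} needs only $\lVert \Hermitian{\SigSpaceMat}_\TgtInd \SteerVec\rVert_2^2$.
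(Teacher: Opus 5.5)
Your proposal is correct and follows the paper's own route. The paper justifies the equivalence in step (d) preceding the proposition: it writes $\SigSpaceMat_\TgtInd\Hermitian{\SigSpaceMat}_\TgtInd = \CompProjMat(\GreedySet{\EstVal{\AoaVec}}{\TgtInd}) - \NoiseSpaceMat_\TgtInd\Hermitian{\NoiseSpaceMat}_\TgtInd$ and expands the norm, which is exactly your pointwise identity, and the OLS-iMUSIC forms follow by dividing by $\lVert \SteerVec_\TgtInd(\CandVal{\Aoa}_{\TgtInd+1})\rVert_2^2$. Your derivation of that projector identity from $\SigSpaceMat\Hermitian{\SigSpaceMat}+\NoiseSpaceMat\Hermitian{\NoiseSpaceMat}=\Identity_\SRxAll$ and idempotence, together with your treatment of the hat/no-hat mismatch in \eqref{eq:residual_subspaces} and of candidates with $\SteerVec_\TgtInd(\CandVal{\Aoa}_{\TgtInd+1})=\mathbf{0}$, fills in details the paper leaves implicit.
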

In conclusion, the proposed \gls{ompimusic} and \gls{olsimusic} selection steps in \eqref{eq:simo_OMP_imusic_Signal} and \eqref{eq:simo_OMP_imusic_Noise} are a direct application of the \gls{music} and \gls{wmusic} pseudospectrum evaluation based on the residual signal subspace, respectively in \eqref{eq:pseudo_spectrum_signal} and \eqref{eq:pseudo_spectrum_signal_weighted}. 
The \gls{ols} variants in \eqref{eq:simo_OLS_imusic_Signal} and \eqref{eq:simo_OLS_imusic_Noise} are a modified versions of the same criteria, as they include an additional normalization term in the denominator, which improves the estimation performance at the price of an additional computation.  

\subsection{Practical \gls{gimusic} Implementation} \label{sec:imusic_practical_steps}

Algorithm~\ref{alg:greedy_imusic} summarizes the successive steps of the proposed \gls{gimusic} methods for multi-target \gls{doa} estimation.
To clarify this workflow, the following subsections detail each processing step.

\begin{algorithm}[t]
    \caption{The proposed \gls{gimusic} algorithms}
    \label{alg:greedy_imusic}
    \KwIn{$\TgtAll$ and $\SRxSigMat$ or $\SCovMat$}
    \vspace{0.05cm}
    \KwOut{$\widehat{\AoaVec}_{\TgtAll} = [\widehat{\Aoa}_1 \ \dots  \ \widehat{\Aoa}_{\TgtAll}]$}

    \Begin{
    \textbf{1. \Acrshort{evd}:} Compute eigenvalues/eigenvectors of $\SCovMat$\;
    \textbf{2. Signal/Noise partition:} $\SCovMat = \SigSpaceMat \SigEigMat \Hermitian{\SigSpaceMat} + \NoiseSpaceMat \NoiseEigMat \Hermitian{\NoiseSpaceMat}$\;
    \textbf{3. Initialization:} $\GreedySet{\EstVal{\AoaVec}}{0} \gets [~]$, $\ProjMat^{\Comp} \big(\GreedySet{\EstVal{\AoaVec}}{0}\big) \gets \Identity_\SRxAll$ and $\SigSpaceMat_0 \gets \SigSpaceMat$ or $\NoiseSpaceMat_0 \gets \NoiseSpaceMat$\;
        \For{$k\gets 0$ \KwTo $\TgtAll-1$}{
        	\textbf{4. Selection Step:} 
            \newline \vspace{2pt} a) \gls{ompiwmusic}: $\EstVal{\Aoa}_{\TgtInd+1} \gets $ \eqref{eq:simo_OMP_iwmusic_Signal}; 
            \newline \vspace{2pt} b) \gls{olsiwmusic}: $~\EstVal{\Aoa}_{\TgtInd+1} \gets $ \eqref{eq:simo_OLS_iwmusic_Signal}; 
            \newline \vspace{2pt} c) \gls{ompimusic}: $\EstVal{\Aoa}_{\TgtInd+1} \gets $ \eqref{eq:simo_OMP_imusic_Signal};
             \newline \vspace{2pt} d) \gls{olsimusic}: $~\EstVal{\Aoa}_{\TgtInd+1} \gets $ \eqref{eq:simo_OLS_imusic_Signal} or \eqref{eq:simo_OLS_imusic_Noise}\;

        	\textbf{5. Update Step:} $\GreedySet{\EstVal{\AoaVec}}{\TgtInd+1} \gets [\GreedySet{\EstVal{\AoaVec}}{\TgtInd},\EstVal{\Aoa}_{\TgtInd+1}]$, $\ProjMat^{\Comp}\big(\GreedySet{\EstVal{\AoaVec}}{\TgtInd+1}\big) \gets $ \eqref{eq:simo_ProjCompMat} and $\SigSpaceMat_{\TgtInd} \gets \ProjMat^{\Comp}(\GreedySet{{\AoaVec}}{\TgtInd}) \ \SigSpaceMat$ or $\NoiseSpaceMat_{\TgtInd} \gets \ProjMat^{\Comp}(\GreedySet{{\AoaVec}}{\TgtInd}) \ \NoiseSpaceMat$\; 
    	}
    }
\end{algorithm}

\subsubsection{Eigenvalue Decomposition}
The first step computes the \gls{evd} of the sample covariance matrix $\SCovMat$.
Unlike prior \gls{imusic} approaches that require an \gls{evd} at each iteration, the proposed algorithms perform only one \gls{evd} at initialization to obtain the signal and noise subspaces. Then, the residual subspaces are updated at each iteration through simple matrix multiplications with the projection matrix $\ProjMat^{\Comp}(\GreedySet{{\AoaVec}}{\TgtInd})$, as in \eqref{eq:residual_subspaces}.

\subsubsection{Signal- and Noise-Subspace Partition}
Using the computed eigenpairs, we next partition $\SCovMat$ into signal and noise components, as in Section~\ref{sec:music_sig_noise_partition}.

\subsubsection{Initialization}
Initialization follows the standard greedy methods procedure detailed in Section~\ref{sec:OMP_OLS}.
Depending on the selected criterion, we additionally initialize one residual subspace matrix, i.e., $\SigSpaceMat_0 = \SigSpaceMat$ or $\NoiseSpaceMat_0 = \NoiseSpaceMat$.
The algorithm then alternates between selection and update steps until all targets are estimated.

\subsubsection{Selection Step}
At each iteration, the algorithm selects the next target \gls{doa} by solving an optimization problem over the search grid $\Grid_\Aoa$.
Optionally, gradient ascent refinement can enable off-grid target detection.
Depending on the variant of interest, the selection step uses the criteria defined in Proposition~\ref{prop:imusic_omp_ols} or Proposition~\ref{prop:iwmusic_omp_ols}.

\subsubsection{Update Step}
After selection, the update step is described in Section~\ref{sec:OMP_OLS}.
In parallel, we update the residual signal or noise subspace matrices at each iteration from \eqref{eq:residual_subspaces}.

\section{Fast Fourier Transform Acceleration} \label{sec:fft_acceleration}
This section presents \gls{fft}-based accelerations for the algorithms introduced in Sections~\ref{sec:state_of_the_art} and \ref{sec:imusic} in the \gls{ula} setting.

For each algorithm, the selection step evaluates an objective function over a grid $\Grid_\Aoa$ of candidate \gls{doa} values.
We denote by $\PseudoVec \in \Rset^{1 \times \GridAll}$ the vector containing these evaluations for all $\GridAll$ grid points.
In general, $\PseudoVec$ is obtained by replacing the steering vector $\SteerVec(\Aoa_{\GridInd})$ in the corresponding objective function with the steering matrix $\SteerMat(\Grid_\Aoa)$ and then applying the column-wise $\ell_2$ norm $\lVert \cdot \rVert_{2,\Col}$.
When the array is a \gls{ula}, $\SteerMat(\Grid_\Aoa)$ admits a \gls{dft} representation, which can be computed efficiently using the \gls{fft}.
Accordingly, we define the \gls{dft} matrix $\FFTMat \in \Cset^{\SRxAll \times \GridAll}$ with entries $\FFTVal_{\SRxInd,\GridInd}= e^{-j 2 \pi \SRxInd \GridInd/\GridAll}$ for antenna indices $\SRxInd=0,\dots,\SRxAll-1$ and normalized frequency indices $\GridInd=0,\dots,\GridAll-1$.
These normalized frequencies are then mapped to their corresponding \gls{doa} values in $\Grid_\Aoa$.

With this notation in place, the \gls{music} (respectively, \gls{wmusic}) pseudospectrum evaluations based on the noise subspace in \eqref{eq:pseudo_spectrum_noise} (respectively, \eqref{eq:pseudo_spectrum_noise_weighted}) and on the signal subspace in \eqref{eq:pseudo_spectrum_signal} (respectively, \eqref{eq:pseudo_spectrum_signal_weighted}) can be written equivalently as
\begin{align}
    \PseudoVec_{\NoiseSpaceMat}^\text{\gls{music}} & = \left\lVert \Hermitian{\NoiseSpaceMat} \ \FFTMat \right\rVert_{2,\Col}^{-2}, \ \PseudoVec_{\NoiseEigMat\NoiseSpaceMat}^\text{\gls{wmusic}} = \left\lVert \Hermitian{(\SqrtMat{\NoiseEigMat}\NoiseSpaceMat)} \ \FFTMat \right\rVert_{2,\Col}^{-2}, \\
    \PseudoVec_{\SigSpaceMat}^\text{\gls{music}} & = \left\lVert \Hermitian{\SigSpaceMat} \ \FFTMat \right\rVert_{2,\Col}^2, \ \PseudoVec_{\SigEigMat\SigSpaceMat}^\text{\gls{wmusic}} = \left\lVert \Hermitian{(\SqrtMat{\SigEigMat}\SigSpaceMat)} \ \FFTMat \right\rVert_{2,\Col}^2.
\end{align}
Similarly, at each iteration $\TgtInd$, the \gls{omp} objective values over the grid can be computed from the residual observation matrix $\SRxSigMat_{\TgtInd}$ in \eqref{eq:simo_OMP_Y} and from the residual sample covariance matrix $\GreedySet{\SCovMat}{\TgtInd}$ in \eqref{eq:simo_OMP_Rsqrt} as
\begin{equation}
    \PseudoVec_{\SRxSigMat_{\TgtInd}}^\text{\gls{omp}} = \lVert \Hermitian{\SRxSigMat}_{\TgtInd} \ \FFTMat \big\rVert_{2,\Col}^2, \ \PseudoVec_{\SCovMat_{\TgtInd}}^\text{\gls{omp}} = \lVert \Hermitian{(\SqrtMat{\SCovMat}_{\TgtInd})} \ \FFTMat \big\rVert_{2,\Col}^2.
\end{equation}
For \gls{music} and \gls{omp}, such \gls{fft}-based accelerations are already known in the literature.
For \gls{ols}, in contrast, a \gls{dft} formulation of the objective function is available only by exploiting the projection-matrix decomposition in \cite{7543,10979421}.
Building on that decomposition, Proposition~\ref{proposition:fft_ols} provides an \gls{fft}-based acceleration of \gls{ols}.
\begin{proposition} \label{proposition:fft_ols}
In the case of \glspl{ula}, the evaluation of the \gls{ols} objective functions in \eqref{eq:simo_OLS_Y} and \eqref{eq:simo_OLS_Rsqrt} can be accelerated through the following element-wise \gls{fft} quotients, respectively:
\begin{equation}
    \PseudoVec_{\SRxSigMat_{\TgtInd}}^\textup{\gls{ols}} = \frac{\big\lVert \Hermitian{\SRxSigMat}_{\TgtInd} \  \FFTMat \big\rVert_{2,\Col}^2}{\big\lVert  \ProjMat^{\Comp} \big(\GreedySet{\EstVal{\AoaVec}}{\TgtInd}\big) \ \FFTMat \big\rVert_{2,\Col}^2}, \
    \PseudoVec_{\SCovMat_{\TgtInd}}^\textup{\gls{ols}} = \frac{\big\lVert \Hermitian{(\SqrtMat{\SCovMat}_{\TgtInd})} \ \FFTMat \big\rVert_{2,\Col}^2}{\big\lVert  \ProjMat^{\Comp} \big(\GreedySet{\EstVal{\AoaVec}}{\TgtInd}\big) \ \FFTMat \big\rVert_{2,\Col}^2}.
\end{equation}
\end{proposition}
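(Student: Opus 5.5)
The plan is to start from the efficient \gls{ols} forms \eqref{eq:simo_OLS_Y} and \eqref{eq:simo_OLS_Rsqrt} and handle numerator and denominator separately. In both cases the denominator is $\lVert \SteerVec_\TgtInd(\CandVal{\Aoa}_{\TgtInd+1}) \rVert_2^2 = \lVert \CompProjMat(\GreedySet{\EstVal{\AoaVec}}{\TgtInd}) \SteerVec(\CandVal{\Aoa}_{\TgtInd+1}) \rVert_2^2$, by the definition given after \eqref{eq:proj_mat_decomposition}. The numerator is exactly the \gls{omp} objective, \eqref{eq:simo_OMP_Y} or \eqref{eq:simo_OMP_Rsqrt}. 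So the \gls{ols} criterion is, pointwise in the candidate angle, the \gls{omp} criterion divided by a squared norm of a linear map applied to the steering vector. Evaluating it over the grid is therefore the element-wise quotient of two column-norm vectors, provided both admit a \gls{dft} representation.

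The first step is to make the \gls{ula}/\gls{dft} correspondence explicit. On the grid $\Grid_\Aoa$, obtained by mapping the normalized frequencies $\GridInd/\GridAll$ to angles through $\WaveNumber\ArraySpacing\sin\Aoa_\GridInd \equiv -2\pi\GridInd/\GridAll \pmod{2\pi}$, the steering matrix satisfies $\SteerMat(\Grid_\Aoa)=\FFTMat$ column by column. Then for any matrix $\mathbf{X}$ with $\SRxAll$ rows, the vector of values $\lVert \Hermitian{\mathbf{X}}\SteerVec(\Aoa_\GridInd)\rVert_2^2$ over $\GridInd$ equals $\lVert \Hermitian{\mathbf{X}} \FFTMat\rVert_{2,\Col}^2$. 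This is the same identity already used above for the \gls{music} and \gls{omp} \gls{fft} evaluations, and I would invoke it directly.

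Next, I would apply this identity with $\mathbf{X}=\SRxSigMat_\TgtInd$ (respectively $\mathbf{X}=\SqrtMat{\SCovMat}_\TgtInd$) to obtain the numerators, which reproduces $\PseudoVec^{\text{\gls{omp}}}_{\SRxSigMat_\TgtInd}$ and $\PseudoVec^{\text{\gls{omp}}}_{\SCovMat_\TgtInd}$. For the denominator I would use $\mathbf{X}=\CompProjMat(\GreedySet{\EstVal{\AoaVec}}{\TgtInd})$. Since this matrix is Hermitian, $\Hermitian{\mathbf{X}}\FFTMat=\CompProjMat\FFTMat$, which gives the stated denominator. Each product $\Hermitian{\mathbf{X}}\FFTMat$ is computed by applying a length-$\GridAll$ zero-padded \gls{fft} along the antenna dimension of each row of $\Hermitian{\mathbf{X}}$. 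Because the objective depends on $\SteerVec$ only through squared moduli, the sign convention of $\FFTMat$ relative to $\SteerVec$ is irrelevant up to complex conjugation and the index relabelling $\GridInd\mapsto -\GridInd$. Dividing entry-wise then yields the claimed quotients.

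The main point to handle carefully is the grid mapping. The \gls{dft} grid is uniform in $\sin\Aoa$, not in $\Aoa$, and frequencies with $|2\pi\GridInd/\GridAll| > \WaveNumber\ArraySpacing$ correspond to no physical angle. I would state that the \gls{fft} evaluates the objective on this induced $\sin$-uniform grid, and that only indices mapping into $[-\pi/2,\pi/2]$ are retained. One further caveat: the denominator vanishes at previously selected angles, where $\CompProjMat\FFTMat$ has zero columns. The quotient is therefore defined away from those grid points, exactly as in \eqref{eq:simo_OLS_Y}, so those entries are excluded or set to zero.
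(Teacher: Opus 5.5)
Your proposal is correct and follows the same route the paper takes: the paper gives no separate proof, only the remark that the decomposition \eqref{eq:proj_mat_decomposition} turns the OLS criterion into the quotient \eqref{eq:simo_OLS_Y} or \eqref{eq:simo_OLS_Rsqrt} of two squared norms of linear maps applied to $\SteerVec(\CandVal{\Aoa}_{\TgtInd+1})$, each of which becomes a column norm of a matrix times $\FFTMat$ on a ULA, with $\CompProjMat$ Hermitian giving the denominator exactly as you argue. Your caveats on the $\sin$-uniform grid and on the $0/0$ entries at already-selected grid angles are valid additions the paper leaves implicit, but the sign-convention remark is unnecessary and its stated reason is off: your explicit mapping $\WaveNumber\ArraySpacing\sin\Aoa_\GridInd \equiv -2\pi\GridInd/\GridAll$ already gives $\SteerMat(\Grid_\Aoa)=\FFTMat$, and a flipped convention is just the column permutation $\GridInd\mapsto\GridAll-\GridInd$, not something that follows from the objective involving squared moduli.
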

Having established the \gls{ols} case, we now turn to the proposed \gls{gimusic} methods.
Proposition~\ref{proposition:fft_imusic} presents the \gls{fft}-based accelerations of \gls{ompimusic} and \gls{olsimusic}.
\begin{proposition} \label{proposition:fft_imusic}
In the case of \glspl{ula}, the computation of the \gls{ompimusic} objective function in \eqref{eq:simo_OMP_imusic_Signal} can be accelerated using an \gls{fft} as
\begin{equation}
    \PseudoVec_{\SigSpaceMat_{\TgtInd}}^\textup{\gls{ompimusic}} = \big\lVert \Hermitian{\SigSpaceMat}_{\TgtInd} \  \FFTMat \big\rVert_{2,\Col}^2.
\end{equation}
The evaluation of the \gls{olsimusic} objective functions in \eqref{eq:simo_OLS_imusic_Signal} and \eqref{eq:simo_OLS_imusic_Noise} can be accelerated through the following element-wise \gls{fft} quotients, respectively:
\begin{align}
    \PseudoVec_{\SigSpaceMat_{\TgtInd}}^\textup{\gls{olsimusic}} & = \frac{\big\lVert \Hermitian{\SigSpaceMat}_{\TgtInd} \  \FFTMat \big\rVert_{2,\Col}^2}{\big\lVert  \ProjMat^{\Comp} \big(\GreedySet{\EstVal{\AoaVec}}{\TgtInd}\big) \FFTMat \big\rVert_{2,\Col}^2}, \\  
    \PseudoVec_{\NoiseSpaceMat_{\TgtInd}}^\textup{\gls{olsimusic}} & = 1 - \frac{\big\lVert \Hermitian{\NoiseSpaceMat}_{\TgtInd} \ \FFTMat \big\rVert_{2,\Col}^2}{\big\lVert  \ProjMat^{\Comp} \big(\GreedySet{\EstVal{\AoaVec}}{\TgtInd}\big) \ \FFTMat \ big\rVert_{2,\Col}^2}.
\end{align}
\end{proposition}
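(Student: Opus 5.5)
The plan is to treat Proposition~\ref{proposition:fft_imusic} exactly as the \gls{ols} case in Proposition~\ref{proposition:fft_ols}. Each objective in Proposition~\ref{prop:imusic_omp_ols} is a ratio (or difference) of squared norms of linear maps applied to $\SteerVec(\CandVal{\Aoa})$. Stacking all grid candidates replaces $\SteerVec(\CandVal{\Aoa})$ by $\SteerMat(\Grid_\Aoa)$, and the column-wise norm then returns the vector of objective values. The remaining task is to justify that $\SteerMat(\Grid_\Aoa)$ may be replaced by the \gls{dft} matrix $\FFTMat$ without changing any of these norms.

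\textbf{Step 1: steering matrix versus \gls{dft} matrix.} For a \gls{ula}, the $m$-th entry of $\SteerVec(\CandVal{\Aoa}_\GridInd)$ is $e^{j \WaveNumber \ArraySpacing \sin(\CandVal{\Aoa}_\GridInd) m}$. Choosing the grid so that $\WaveNumber\ArraySpacing\sin(\CandVal{\Aoa}_\GridInd) \equiv -2\pi \GridInd/\GridAll \pmod{2\pi}$ (the mapping between normalized frequencies and \glspl{doa} stated before Proposition~\ref{proposition:fft_ols}) gives $\SteerMat(\Grid_\Aoa)=\FFTMat$ column by column. If the sign convention differs, the columns are related by complex conjugation or index reversal $\GridInd \mapsto \GridAll-\GridInd$, which only permutes grid points. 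If only a subset of \gls{dft} bins corresponds to physical angles, the unused bins are discarded after the \gls{fft}.

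\textbf{Step 2: column-wise evaluation.} For any matrix $\mathbf{B}$, the $\GridInd$-th entry of $\lVert \mathbf{B}\,\FFTMat \rVert_{2,\Col}^2$ equals $\lVert \mathbf{B}\,\SteerVec(\CandVal{\Aoa}_\GridInd)\rVert_2^2$. Three choices of $\mathbf{B}$ cover all the objectives:
\begin{itemize}
\item $\mathbf{B}=\Hermitian{\SigSpaceMat}_\TgtInd$ gives \eqref{eq:simo_OMP_imusic_Signal}, i.e., $\PseudoVec_{\SigSpaceMat_{\TgtInd}}^{\textup{\gls{ompimusic}}}$;
\item $\mathbf{B}=\Hermitian{\NoiseSpaceMat}_\TgtInd$ gives the numerator of \eqref{eq:simo_OLS_imusic_Noise};
\item $\mathbf{B}=\CompProjMat(\GreedySet{\EstVal{\AoaVec}}{\TgtInd})$ gives $\lVert \SteerVec_\TgtInd(\CandVal{\Aoa}_\GridInd)\rVert_2^2$, since $\SteerVec_\TgtInd = \CompProjMat\,\SteerVec$ by definition.
\end{itemize}
Dividing element-wise yields the two \gls{olsimusic} expressions, one for each of \eqref{eq:simo_OLS_imusic_Signal} and \eqref{eq:simo_OLS_imusic_Noise}. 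The denominator is the same one already handled in Proposition~\ref{proposition:fft_ols}, so that part can be cited directly.

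\textbf{Step 3: the \gls{fft} claim.} Each product $\mathbf{B}\FFTMat$ applies a length-$\GridAll$ zero-padded \gls{fft} to each row of $\mathbf{B}$, since $(\mathbf{B}\FFTMat)_{r,\GridInd}=\sum_m B_{r,m}e^{-j2\pi m\GridInd/\GridAll}$. Because $\Hermitian{\SigSpaceMat}_\TgtInd$ has $\TgtAll$ rows, $\Hermitian{\NoiseSpaceMat}_\TgtInd$ has $\SRxAll-\TgtAll$ rows, and $\CompProjMat$ has $\SRxAll$ rows, all of these products are computed with a few \glspl{fft} of size $\GridAll$. This is where the acceleration comes from.

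\textbf{Main obstacle.} The only real care point is the grid/sign bookkeeping in Step~1: matching the sign $e^{+j}$ in the steering vector with the sign $e^{-j}$ in $\FFTMat$, and handling the nonuniform mapping $\sin\CandVal{\Aoa}$. I would settle this by noting that conjugation and index permutation preserve column norms up to reindexing, so the argmax over the grid is unaffected. The rest is the definition of the column-wise norm.
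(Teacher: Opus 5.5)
Your proposal is correct and takes the same route as the paper. The paper gives no separate proof and relies on the remark before Proposition~\ref{proposition:fft_ols}: each objective is obtained by substituting $\SteerMat(\Grid_\Aoa)$ for $\SteerVec(\CandVal{\Aoa})$ and taking column-wise norms, and for a ULA $\SteerMat(\Grid_\Aoa)$ is the DFT matrix $\FFTMat$. Your Steps 1--3, with the sign/permutation bookkeeping and the three choices of $\mathbf{B}$, make that remark explicit.
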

By the same argument, weighted variants are accelerated by replacing the signal and noise subspace matrices with their weighted counterparts, as in \gls{wmusic}.
\section{Computational Complexity Analysis} \label{sec:Complexity Analysis}

\begin{table*}[t]
\centering
\caption{Theoretical complexity comparison of the main processing steps of the considered multi-target \Acrshort{doa} estimation methods without \gls{fft} acceleration. In case of a \gls{ula}, the \gls{fft} acceleration reduces the complexity of the objective function evaluation by replacing $\GridAll \SRxAll$ by $\GridAll \log_2(\SRxAll)$.
For conciseness, we define $\TgtAll_\SRxAll = \min(\TgtAll,\SRxAll-\TgtAll)$.}
\label{tab:complexity_comparison}
\renewcommand{\arraystretch}{1.2}
\setlength{\tabcolsep}{6pt}
\begin{tabular}{l|l|l|l|l|l|l|l}
\toprule
 &
\textbf{\gls{music}} &
\textbf{\gls{omp}-$\SRxSigMat$} &
\textbf{\gls{ols}-$\SRxSigMat$} &
\textbf{\gls{omp}-$\SqrtMat{\SCovMat}$} &
\textbf{\gls{ols}-$\SqrtMat{\SCovMat}$} &
\textbf{\gls{ompimusic}} &
\textbf{\gls{olsimusic}} \\
\midrule
$\SCovMat=\frac{1}{\ObsAll \SubAll} \SRxSigMat \Hermitian{\SRxSigMat}$ & $\bigO(\ObsAll \SubAll \SRxAll^2)$ & - & - & $\bigO(\ObsAll \SubAll \SRxAll^2)$ & $\bigO(\ObsAll \SubAll \SRxAll^2)$ & $\bigO(\ObsAll \SubAll \SRxAll^2)$ & $\bigO(\ObsAll \SubAll \SRxAll^2)$ \\
\midrule
\gls{evd} of $\SCovMat$ & $\bigO(\SRxAll^3)$ & - & - & - & - & $\bigO(\SRxAll^3)$ & $\bigO(\SRxAll^3)$ \\
\midrule
\makecell[l]{Numerator of \\ objective function} & $\bigO(\GridAll \SRxAll \TgtAll_\SRxAll)$ & $\TgtAll \ \bigO(\GridAll \SRxAll \ObsAll \SubAll)$ & $\TgtAll \ \bigO(\GridAll \SRxAll \ObsAll \SubAll)$ & $\TgtAll \ \bigO(\GridAll \SRxAll^2)$ & $\TgtAll \ \bigO(\GridAll \SRxAll^2)$ & $\TgtAll \ \bigO(\GridAll \SRxAll \TgtAll)$ & $\TgtAll \ \bigO(\GridAll \SRxAll \TgtAll_\SRxAll)$ \\
\midrule
\makecell[l]{Denominator of \\ objective function} & - & - & $\TgtAll \ \bigO(\GridAll \SRxAll^2)$ & - & $\TgtAll \ \bigO(\GridAll \SRxAll^2)$ & - & $\TgtAll \ \bigO(\GridAll \SRxAll^2)$ \\
\midrule
Peaks detection & $\bigO(\GridAll)$ & - & - & - & - & - & - \\
\bottomrule
\end{tabular}
\end{table*}

In this section, we present the asymptotic computational complexity of the algorithms presented in Sections~\ref{sec:state_of_the_art} and~\ref{sec:imusic}. 
Table~\ref{tab:complexity_comparison} summarizes the number of \glspl{flop} for the main processing steps of each algorithm. 
Unless stated otherwise, these counts are expressed as a function of key system parameters by following standard \gls{flop} models for basic matrix operations \cite{Richards2013PrinciplesOM,625604}. 
The number of flops in Table~\ref{tab:complexity_comparison} is provided without the \gls{fft} acceleration; its benefit is detailed next.

Assuming each method takes the observation matrix $\SRxSigMat$ as input, computing the sample covariance matrix $\SCovMat$ requires $\bigO(\ObsAll\SubAll\SRxAll^2)$ operations.
For \gls{music}-based methods, computing the \gls{evd} of $\SCovMat$ requires $\bigO(\SRxAll^3)$ \glspl{flop} with standard routines (e.g., the QZ algorithm) \cite{golub2013matrix}.
In addition, to exploit \gls{fft} acceleration in \gls{omp}/\gls{ols} methods based on $\SCovMat$, one initial square-root matrix computation is required (for instance via \gls{evd}). 
We next detail the complexity of the selection step for each algorithm.
Among the considered methods, \gls{music} is the only non-iterative one; hence, it requires a single pseudo-spectrum evaluation over the full grid.
As shown in Section~\ref{sec:OMP_OLS}, the numerators of the \gls{omp} and \gls{ols} selection steps are equivalent.
However, \gls{ols} includes an additional denominator term.
For \gls{gimusic}-based algorithms, \gls{olsimusic} can evaluate its numerator in either the signal or the noise subspace (Proposition~\ref{prop:imusic_omp_ols}), which leads to different complexity expressions.
When a \gls{ula} is used, all methods can exploit \gls{fft} acceleration for numerator and denominator evaluations.
This reduces the $\GridAll\SRxAll$ factor to $\GridAll\log(\SRxAll)$ \cite{donciuFFT2025}.
Finally, because \gls{music} detects all targets in one stage, it requires pseudo-spectrum peak detection.
This adds $\bigO(\GridAll)$ operations, whereas greedy methods only require maximum-value selection at each iteration.
Note that the weighting of the \gls{wmusic} variants is not displayed in Table~\ref{tab:complexity_comparison}, but they do not change the asymptotic complexity of the selection step compared with their unweighted counterparts.

Overall, \gls{music} is expected to have the lowest computational complexity as it does not require iterative processing. 
However, it requires a peak-detection step that greedy methods do not require.
Traditional \gls{omp} algorithms exhibit lower complexity than \gls{ols} due to the absence of a denominator term, although this gap is substantially reduced compared with naive \gls{ols} implementations thanks to the projection-matrix decomposition in \eqref{eq:proj_mat_decomposition}.
The implementation based on $\SqrtMat{\SCovMat}$ is usually preferred over the one based on $\SRxSigMat$ because $\ObsAll\SubAll \gg \SRxAll$ in the considered scenario.
Moreover, the presented \gls{gimusic} algorithms have lower complexity than their traditional \gls{omp} and \gls{ols} counterparts as they project steering vectors onto either the residual signal or noise subspaces rather than $\SqrtMat{\SCovMat}_{\TgtInd}$.
Thanks to the possible evaluation of the \gls{olsimusic} numerator using either the signal or noise subspace, its additional computational burden with respect to \gls{ompimusic} can be reduced, depending on the relative sizes of $\TgtAll$ and $\SRxAll$. 
Finally, \gls{fft} acceleration further reduces the computational burden of all algorithms in \gls{ula} configurations.
While the asymptotic analysis in Table~\ref{tab:complexity_comparison} identifies the dominant computational components of each method, it does not fully capture practical runtime behavior.
Therefore, Section~\ref{sec:simulations} complements this analysis with processing-time evaluations.
\section{Simulation Setup and Results} \label{sec:simulations}
In this section, we compare the proposed \gls{gimusic} algorithms with \gls{music}, \gls{omp}, and \gls{ols} for multi-target \gls{doa} estimation in a \gls{pr} system.

\subsection{Simulation Scenario}
We evaluate performance over $10{,}000$ Monte Carlo simulations in a scenario with a colocated \gls{ap} and \gls{pr}.
In each simulation, we consider a fixed setup in which multiple targets are randomly positioned within a maximum range of $\Range_{max}=60$ m.
We assume half-wavelength antenna spacing and optimal orientation toward the coverage area. 
Following the Wi-Fi 7 standard, we set the carrier frequency to $f_c=5$ GHz, and the subcarrier spacing to $\SubSpacing=78.125$ kHz \cite{11090080}.
Unless stated otherwise, the system parameters are $\TgtAll=8$ targets, $\SRxAll=16$ antennas, $Q=512$ subcarriers (i.e., $40$ MHz), $\ObsAll=10$ successive symbols, and an \gls{snr} of $40$ dB.
We compute the mean \gls{snr} by averaging \eqref{eq:snr} across all simulations.
With this definition, the \gls{snr} is dominated by the strongest target, which is also the easiest to detect.
As a result, the radar operating \gls{snr} regime may appear higher than that of a conventional communication system.
We define a search grid $\Grid_\Aoa$ with $\GridAll=2048$ uniformly spaced points between $-1$ and $1$, representing the normalized angular domain.

\subsection{Estimation of the number of targets}
In the previous sections, the number of targets $\TgtAll$ is assumed to be known. 
In practice, however, this quantity is unknown and must be estimated from the observed data.
Target-number estimation can be formulated as a model-order selection problem, which is typically addressed using \gls{itc}.
These criteria minimize an objective function that balances a likelihood term against a complexity penalty.
In this work, we consider the \gls{aic} criterion of Wax and Kailath \cite{1164557}.
Intuitively, this criterion evaluates the number of targets by assessing the rank of the signal subspace, which is determined by the number of dominant eigenvalues of the sample covariance matrix $\SCovMat$.
This rank-based criterion is widely used in the \gls{doa} estimation literature and is independent of the specific estimation algorithm from Section~\ref{sec:state_of_the_art} and \ref{sec:imusic}, making it suitable for a fair comparison. 
In this disjoint approach, the number of targets $\EstVal{\TgtAll}_\text{Rank}$ is first estimated before applying the \gls{doa} estimation algorithm.
In our previous work \cite{willameStopping}, we proposed a hybrid criterion that combines the eigenvalue-based \gls{aic} of Wax and Kailath \cite{1164557} with a selection-based criterion that depends on detected \gls{ols} peaks.
Specifically, the first $\EstVal{\TgtAll}_\text{Rank}$ estimated angles are systematically added to the set of detected peaks $\EstVal{\AoaVec}$. Then, the hybrid criterion stops adding new angles when it does not improve the proposed \gls{aic}-\gls{ols} criterion, yielding $\EstVal{\TgtAll}_\text{Hybrid} \geq \EstVal{\TgtAll}_\text{Rank}$ detected targets. 
This hybrid approach significantly improves \gls{ols} and \gls{olsimusic}, but it does not provide a significant gain for \gls{music}, \gls{omp}, or \gls{ompimusic}.
Accordingly, for the two \gls{ols} variants, we also report results obtained with the hybrid criterion.

\subsection{Diagnostic Metrics}
To interpret the observed performance trends beyond the considered scenario, we introduce two scenario-agnostic diagnostic metrics: the \textit{steering-vector correlation metric} ($\ResRatio$) and the \textit{signal correlation metric} ($\CorrRatio$).
Together, these metrics provide a compact measure of the difficulty of the detection scenario.

The steering-vector correlation metric $\ResRatio$ quantifies the mean angular proximity between targets, which captures the physical resolution of the system.
The signal correlation metric $\CorrRatio$ quantifies the mean correlation between target signals, which can be influenced by their intrinsic correlation or by their temporal and spectral diversity.
In the considered passive \gls{ofdm} radar scenario, the target signals are correlated because they share the same unknown data symbols, but they become less correlated as $\SubAll$ and $\ObsAll$ increase as the corresponding range and Doppler resolutions improve.

To define these metrics, we introduce the steering-vector resolution matrix $\ResMat \in \Cset^{\TgtAll \times \TgtAll}$ and the signal correlation matrix $\CorrMat \in \Cset^{\TgtAll \times \TgtAll}$, which are defined as the normalized Gram matrices of the steering vectors and of the target signals, respectively,
\begin{equation} \label{eq:Corr_Res_Matrices}
   \ResMat = \frac{1}{\SRxAll} \Hermitian{\SteerMat}(\AoaVec) \SteerMat(\AoaVec),
   \quad \text{and} \quad
   \CorrMat = \frac{1}{\ObsAll \SubAll} \CoefMat^{\prime} \Hermitian{\CoefMat^{\prime}}.
\end{equation}
The closer these matrices are to diagonal, the lower the inter-target correlations and, consequently, the more favorable the detection scenario.
Their diagonality can be quantified using the criterion proposed in \cite{ALYANI2017290}, which evaluates the mean ratio between diagonal and off-diagonal entries of a matrix, yielding a score between $0$ (for a balanced matrix) and $1$ (for a perfectly diagonal matrix).
The proposed diagnostic metrics $\ResRatio$ and $\CorrRatio$ are therefore computed as the average diagonality ratings across the Monte Carlo simulations.

In theory, both $\ResRatio$ and $\CorrRatio$ should correlate positively with detection performance.
More specifically, greedy methods such as \gls{omp} and \gls{ols} are expected to perform better when $\ResRatio$ is high, because they rely on the correlation between the residual and the steering vectors to select new peaks.
When $\CorrRatio$ is low, greedy methods become sensitive to early selection errors due to unresolved targets \cite{tropp_greed_2004}.
By contrast, subspace methods such as \gls{music} are expected to perform better when $\CorrRatio$ is high, because their super-resolution capability relies on the separability of the targets in the domains averaged in the sample covariance matrix computation, which are here the time and frequency domains \cite{pesavento_three_2023}.

\subsection{Performance Analysis}
The next subsections analyze detection, precision, and timing performance.
For the metric evaluations, we associate detected \glspl{doa} with true \glspl{doa} using the Hungarian algorithm \cite{Kuhn2010}.
We count a hit when a detected peak lies within the main lobe of its associated true target \gls{doa}.
We declare a false alarm when a detected peak cannot be associated with any true target \gls{doa}.
For visual clarity, we omit the curves of \gls{ompiwmusic} and \gls{olsiwmusic} in the performance analysis. 
Their performance is briefly discussed at the end of this section.

\subsubsection{\textbf{Detection Performance}}
\begin{figure*}[!t]
\captionsetup[subfigure]{labelformat=empty, labelsep=none, justification=raggedright, singlelinecheck=false}

\hspace{0.05\textwidth}
\begin{center}
\subfloat[]{
\includegraphics[trim=3.4cm 9.5cm 4.3cm 9.91cm,width = 0.3\textwidth]{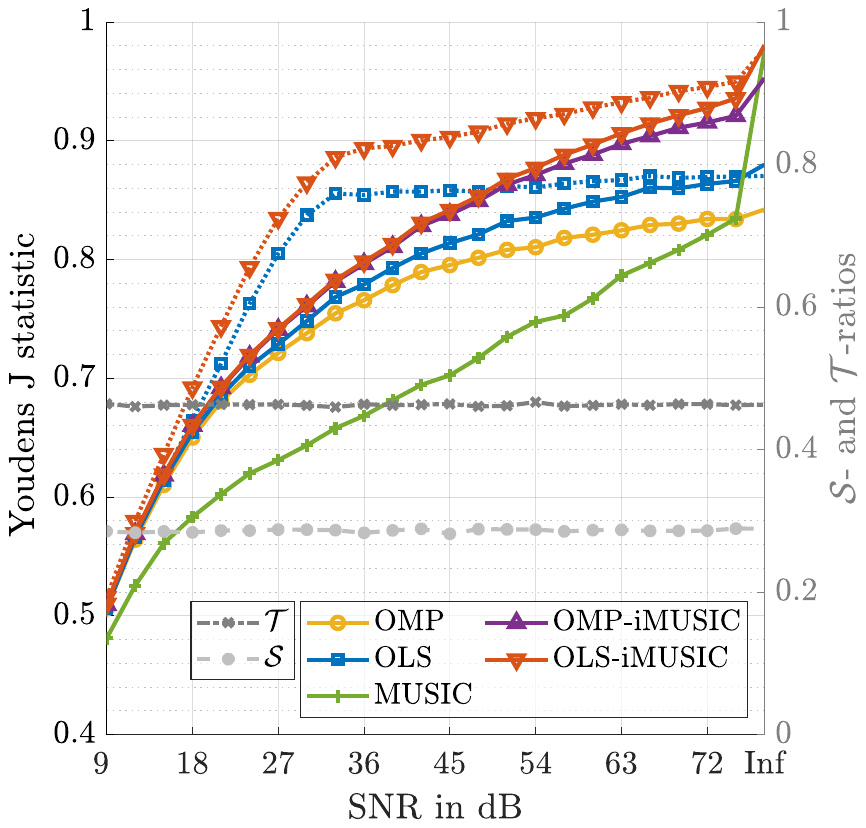}\label{fig:Jstat_vs_SNR}
}
\hfill
\subfloat[]{
\includegraphics[trim=3.4cm 9.5cm 4.3cm 9.91cm,width = 0.3\textwidth]{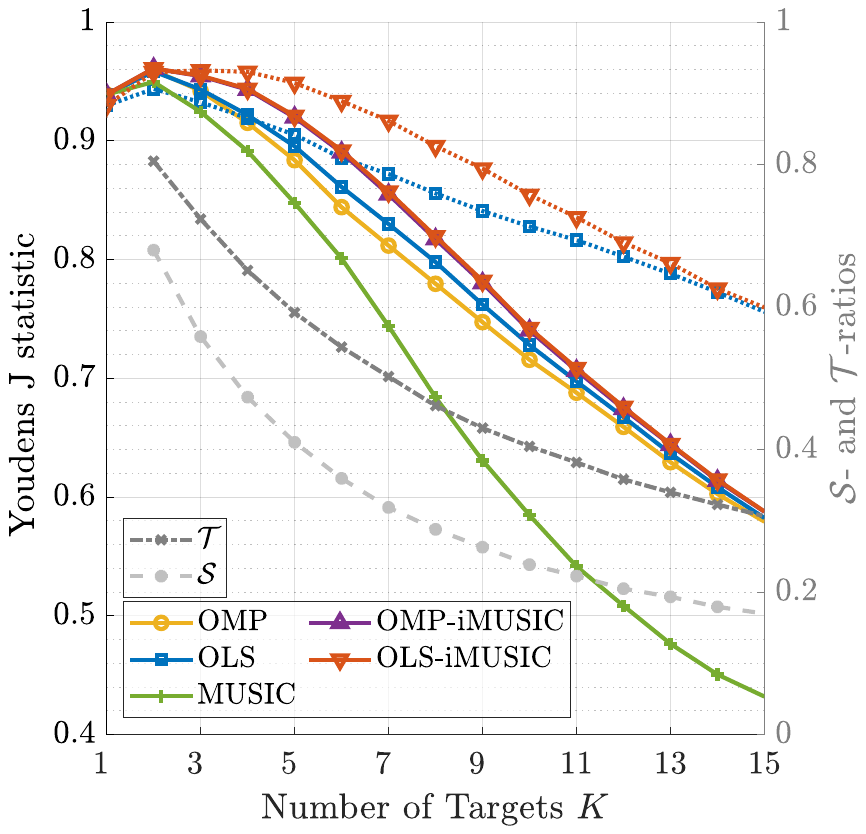}\label{fig:Jstat_vs_K} 
}
\hfill
\subfloat[]{
\includegraphics[trim=3.4cm 9.5cm 4.3cm 9.91cm,width = 0.3\textwidth]{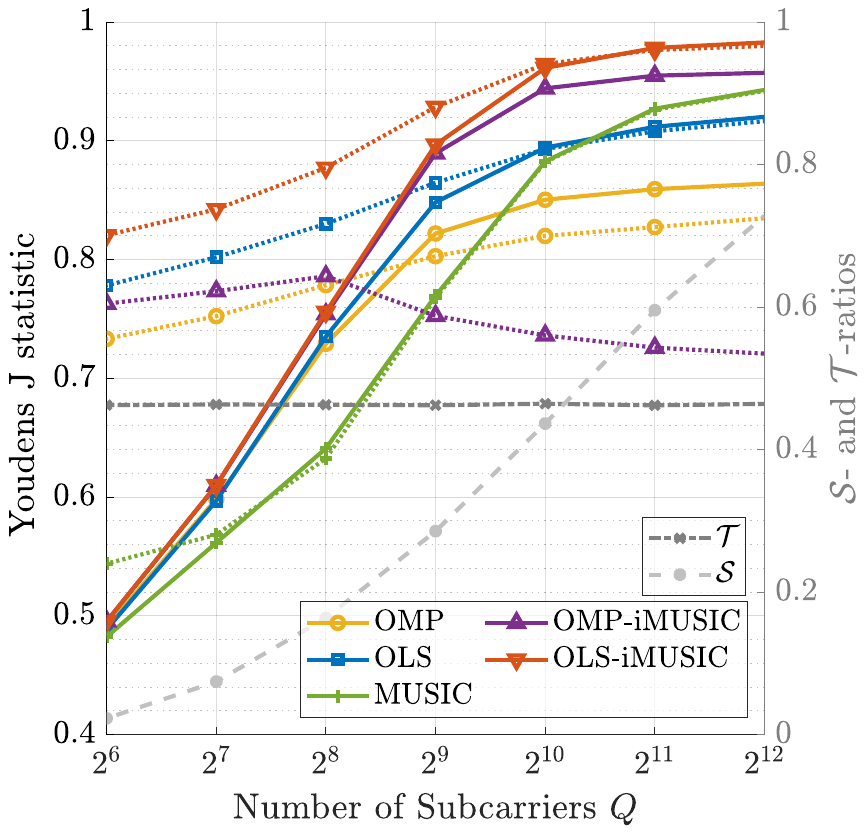}\label{fig:Jstat_vs_Q}
} 
\end{center}
\vspace*{-0.3\textwidth}
\hspace{-.01\textwidth} (a) \hspace{.305\textwidth} (b) \hspace{.305\textwidth} (c) \hfill
\vspace*{0.27\textwidth}
\caption{Youden J statistic versus (a) \gls{snr}, (b) number of targets $\TgtAll$, and (c) number of subcarriers $\SubAll$. Solid lines use the eigenvalue-based \gls{aic} stopping criterion, whereas dotted lines use the hybrid criterion. In (c), hybrid criterion curves are shown for all methods. The $\CorrRatio$ and $\ResRatio$ diagnostics are also plotted (dashed light gray and dash-dotted dark gray, respectively).}
\label{fig:Jstat_comparison}
\end{figure*}
We evaluate each algorithm's ability to detect targets using the Youden J statistic, defined as the difference between hit rate and false alarm rate \cite{Youden1950Index,MartinezCamblor2019}.
A score of $1$ indicates perfect detection, whereas a score of $0$ indicates equal hit and false alarm rates.
\figurename~\ref{fig:Jstat_comparison} reports the Youden J statistic versus \gls{snr}, the number of targets $\TgtAll$, and the number of subcarriers $\SubAll$.
Because $\SubSpacing$ is fixed, increasing $\SubAll$ also increases the bandwidth $\Bandwith$.

We first examine the \gls{snr} sweep in \figurename~\ref{fig:Jstat_vs_SNR}.
At low \gls{snr}, greedy methods outperform \gls{music}.
As the \gls{snr} increases, \gls{music} approaches greedy methods because signal and noise subspaces become easier to separate.
Accordingly, the performance gap between \gls{gimusic} and their \gls{omp}/\gls{ols} counterparts is larger at high \gls{snr}.

We next vary the number of targets in \figurename~\ref{fig:Jstat_vs_K}.
Detection performance decreases as $\TgtAll$ increases, as expected to follow the simultaneous decrease of $\CorrRatio$ and $\ResRatio$.
At low $K$, all methods show similar detection performance.
As $K$ increases, however, the gain of \gls{ols}-based variants with the hybrid criterion becomes more pronounced.
For visual clarity, we omit the hybrid-criterion curves of \gls{omp} and \gls{ompimusic}.
Although they exceed their eigenvalue-based counterparts at high $K$, they remain below \gls{ols} and \gls{olsimusic}.

\figurename~\ref{fig:Jstat_vs_Q} analyzes the effect of $\SubAll$ for both stopping criteria.
At low $\SubAll$, the eigenvalue-based criterion degrades all methods \cite{willameStopping}; therefore, we discuss this low-$\CorrRatio$ regime through the hybrid target selection criterion.
As expected, \gls{music} fails in this regime. 
In contrast, the proposed \gls{gimusic} variants outperform their \gls{omp}/\gls{ols} counterparts, indicating stronger robustness for low-$\CorrRatio$  than \gls{music}.
As $\SubAll$ increases, methods that rely on the eigenvalue-based \gls{aic} criterion improve markedly.
At high $\CorrRatio$, \gls{ompimusic} approaches \gls{olsimusic}, which is consistent with the strong behavior of \gls{music} in this regime.

Overall, the \gls{gimusic} methods consistently outperform their respective \gls{omp} and \gls{ols} baselines and also outperform \gls{music}.
Consistent with \cite{willameStopping}, the hybrid \gls{aic} criterion improves both \gls{ols} methods compared to the eigenvalue-based \gls{aic} criterion.
Moreover, the gain of \gls{olsimusic} over \gls{ompimusic} is most visible with the hybrid criterion.

\subsubsection{\textbf{Precision Performance}}
\begin{figure*}[!t]
\captionsetup[subfigure]{labelformat=empty, labelsep=none, justification=raggedright, singlelinecheck=false}

\hspace{0.05\textwidth}
\begin{center}
\subfloat[]{
\includegraphics[trim=3.4cm 9.5cm 4.3cm 9.91cm,width = 0.3\textwidth]{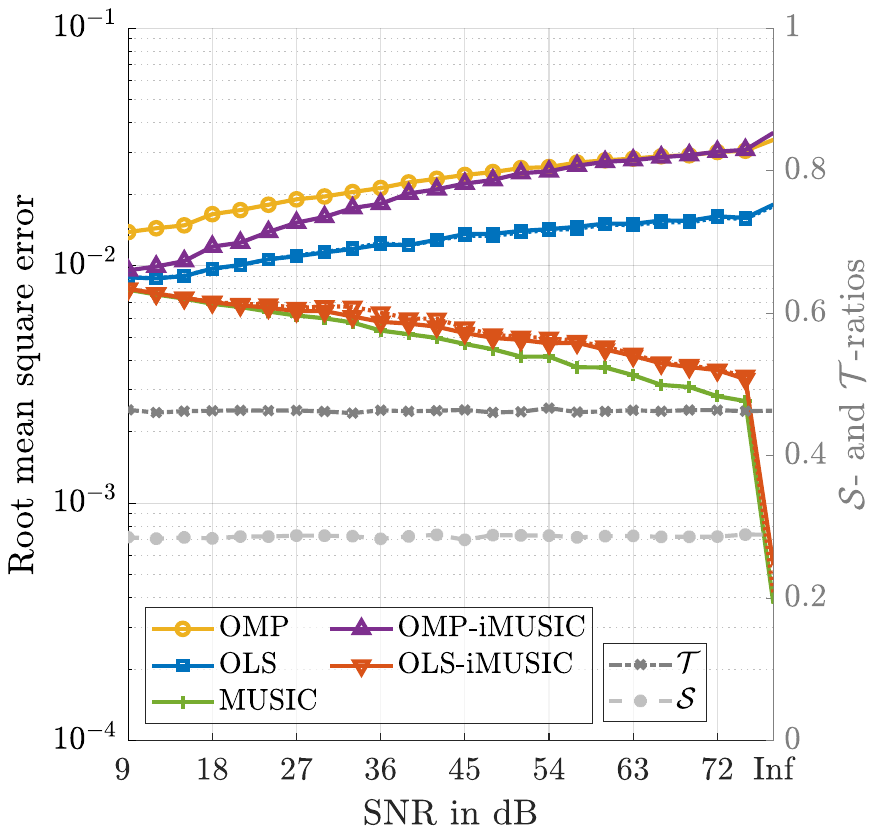}\label{fig:RMSE_vs_SNR}
}
\hfill
\subfloat[]{
\includegraphics[trim=3.4cm 9.5cm 4.3cm 9.91cm,width = 0.3\textwidth]{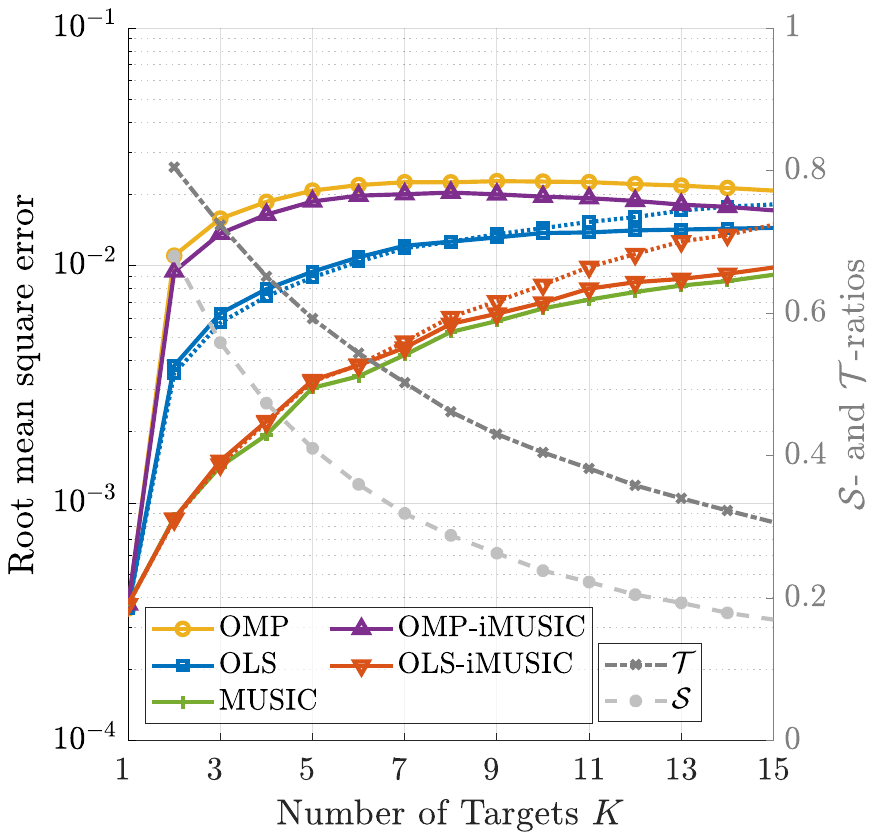}\label{fig:RMSE_vs_K}
}
\hfill
\subfloat[]{
\includegraphics[trim=3.4cm 9.5cm 4.3cm 9.91cm,width = 0.3\textwidth]{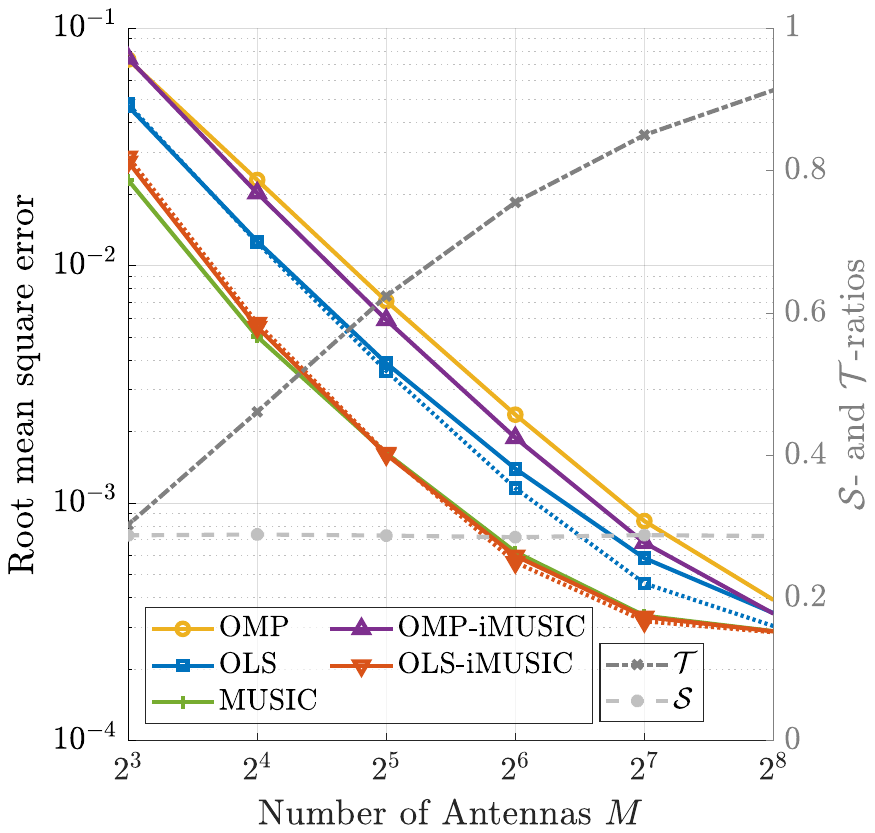}\label{fig:RMSE_vs_N}
}
\end{center}
\vspace*{-0.3\textwidth}
\hspace{-.01\textwidth} (a) \hspace{.305\textwidth} (b) \hspace{.305\textwidth} (c) \hfill
\vspace*{0.27\textwidth}
\caption{\Gls{rmse} versus (a) \gls{snr}, (b) number of targets $\TgtAll$, and (c) number of antennas $\SRxAll$. Solid lines use the eigenvalue-based \gls{aic} stopping criterion, whereas dotted lines use the hybrid criterion. The $\CorrRatio$ and $\ResRatio$ diagnostics are also plotted (dashed light gray and dash-dotted dark gray, respectively).}
\label{fig:RMSE_comparison}
\end{figure*}
We now turn to precision and evaluate it through the \gls{rmse} between true normalized \glspl{doa} and estimated normalized \glspl{doa}.
To ensure a fair comparison, we compute \gls{rmse} only on targets that are hits for all methods.
In other words, we evaluate precision only on the subset of targets that are detected by all methods, which are typically the easiest ones. 
Thus, the precision analysis in Figure~\ref{fig:RMSE_comparison} focuses on the accuracy of peak localization, rather than on the ability to detect difficult targets.

Similarly to the detection analysis, \figurename~\ref{fig:RMSE_vs_SNR} shows that precision improves with \gls{snr} for \gls{music} and \gls{olsimusic}.
By contrast, for the remaining methods, precision degrades as \gls{snr} increases because more difficult targets are counted as hits, with less accurate peak localization.
This trend is consistent with the lack of super-resolution in \gls{omp} and \gls{ols}, and with its only partial exploitation in \gls{ompimusic}.
The same behavior persists in the noiseless-limit regime, where \gls{snr} tends to infinity.

Consistently, \figurename~\ref{fig:RMSE_vs_K} shows that precision degrades as $\TgtAll$ increases, in line with decreasing $\CorrRatio$ and $\ResRatio$.
All methods coincide at $\TgtAll=1$, as expected for single-target estimation.
As $\TgtAll$ grows, the precision gain of \gls{olsimusic} decreases, unlike the detection trend where gains become more visible at high $\TgtAll$.
This difference arises because precision is computed only over targets detected by all methods, which are typically the easiest ones.

\figurename~\ref{fig:RMSE_vs_N} further shows that precision improves with the number of antennas $\SRxAll$, as expected from the increase in $\ResRatio$.
As $\ResRatio$ increases, all methods improve similarly until they reach the resolution limit imposed by the discrete grid $\Grid_\Aoa$ (spacing $\sim 10^{-3}$).
This behavior contrasts with the gain obtained by increasing $\CorrRatio$ through larger $\SubAll$.
Specifically, increasing $\SRxAll$ (and thus improving $\ResRatio$) benefits all methods, whereas increasing $\CorrRatio$ benefits \gls{music}-based and eigenvalue-criterion-based methods more strongly because they are more sensitive to inter-target signal correlation.

Overall, \gls{music} and \gls{olsimusic} outperform \gls{omp}, \gls{ompimusic}, and \gls{ols} in precision.
This behavior is expected because subspace methods such as \gls{music} provide super-resolution.
By contrast, \gls{ompimusic} does not reach \gls{music}-level precision because the greedy \gls{omp} framework does not solve the \gls{ml} least-squares problem at each iteration, which limits peak-localization accuracy.
Therefore, \gls{olsimusic} tends to match the near-optimal precision of \gls{music} by solving a least-squares problem at each iteration, whereas \gls{ompimusic} remains closer to \gls{omp}.

\subsubsection{\textbf{Timing Performance}}
\begin{figure*}[!t]
\captionsetup[subfigure]{labelformat=empty, labelsep=none, justification=raggedright, singlelinecheck=false}

\hspace{0.05\textwidth}
\begin{center}
\subfloat[]{
\includegraphics[trim=3.4cm 9.5cm 4.3cm 9.91cm,width = 0.3\textwidth]{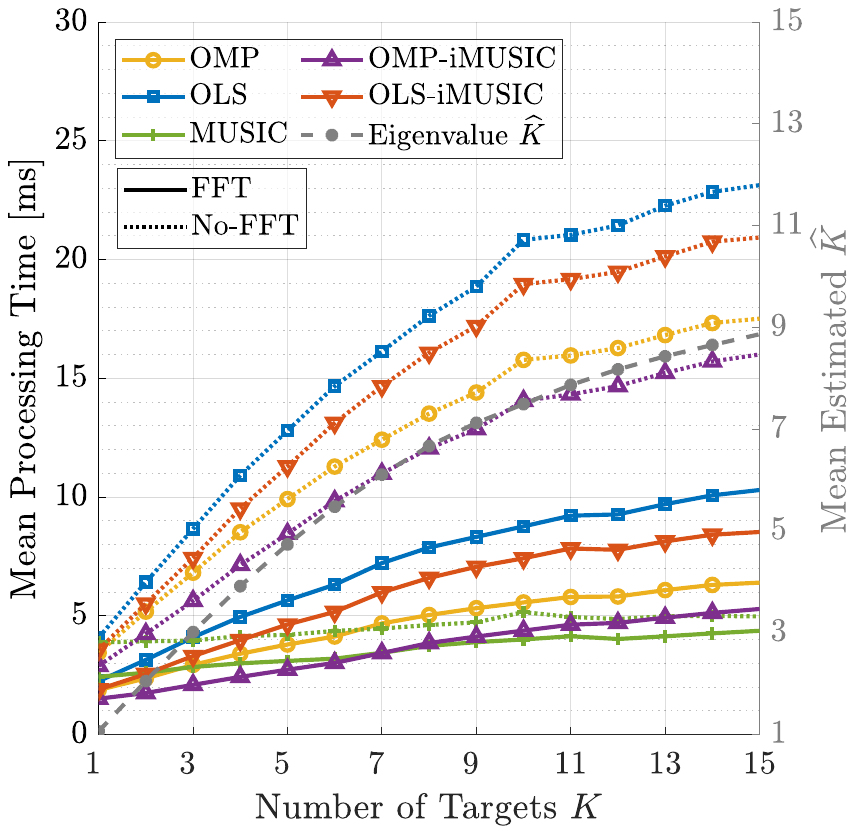}\label{fig:Timing_vs_K}
}
\hfill
\subfloat[]{
\includegraphics[trim=3.4cm 9.5cm 4.3cm 9.91cm,width = 0.3\textwidth]{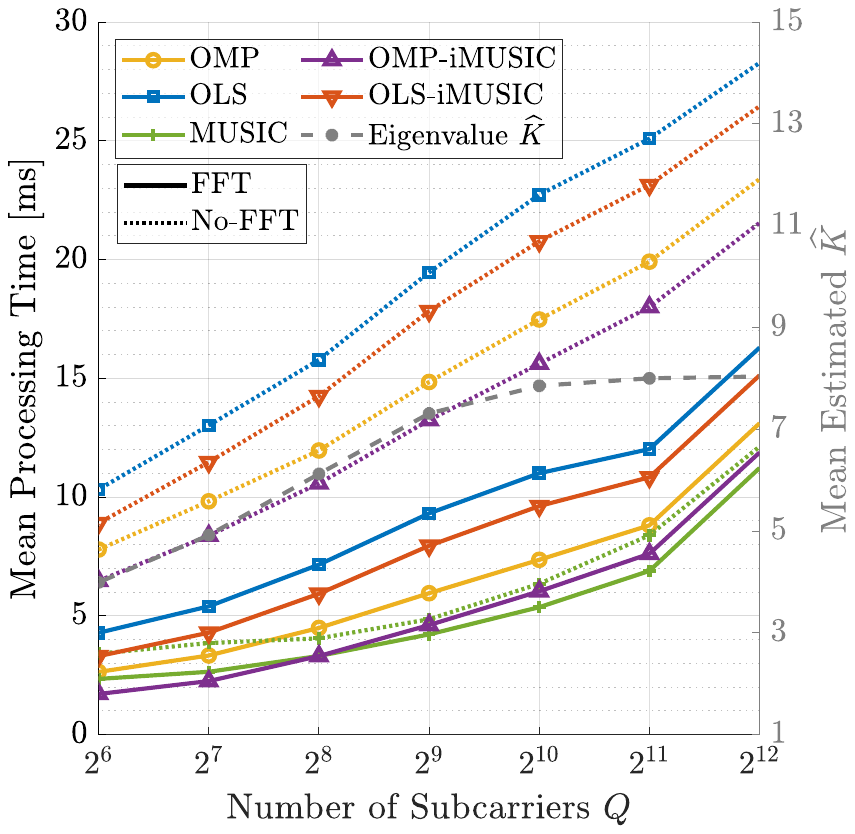}\label{fig:Timing_vs_Q}
}
\hfill
\subfloat[]{
\includegraphics[trim=3.4cm 9.5cm 4.3cm 9.91cm,width = 0.3\textwidth]{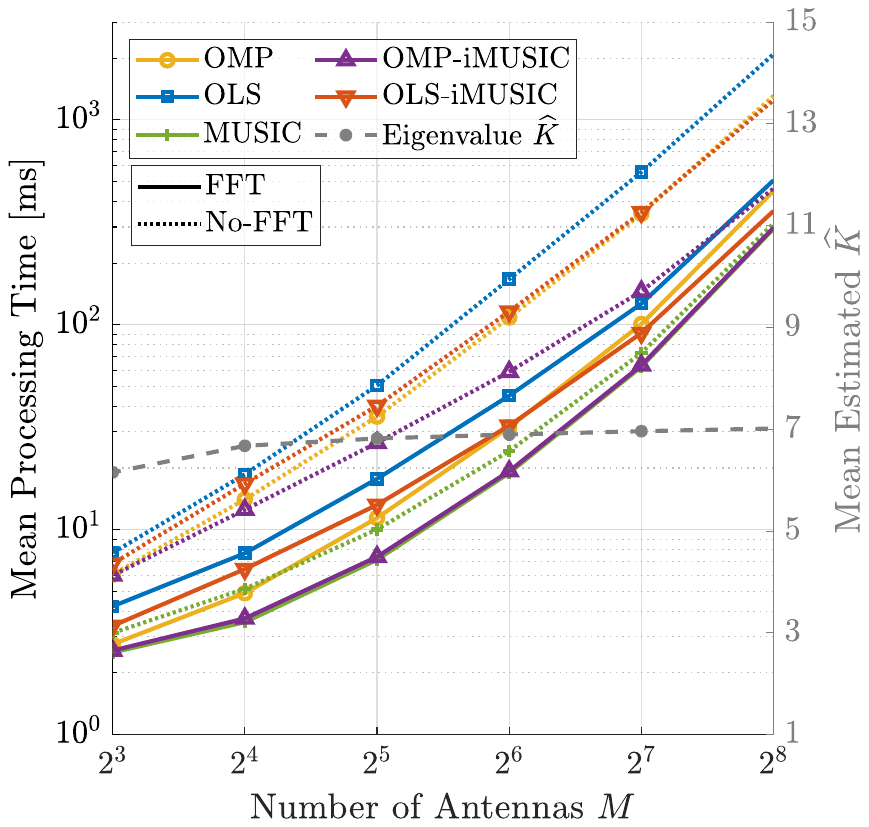}\label{fig:Timing_vs_N}
}
\end{center}
\vspace*{-0.3\textwidth}
\hspace{-.01\textwidth} (a) \hspace{.305\textwidth} (b) \hspace{.305\textwidth} (c) \hfill
\vspace*{0.27\textwidth}
\caption{Mean processing time versus (a) number of targets $\TgtAll$, (b) number of subcarriers $\SubAll$, and (c) number of antennas $\SRxAll$. Solid lines denote \gls{fft}-accelerated implementations, whereas dotted lines denote non-accelerated implementations. In (c), the y-axis is logarithmic. The mean number of targets estimated by the eigenvalue-based criterion is also shown (dashed gray).}
\label{fig:Timing_comparison}
\end{figure*}
We next evaluate timing performance by averaging the mean processing time of each algorithm, with and without \gls{fft} acceleration, across all simulations.
As discussed in Section~\ref{sec:Complexity Analysis}, we implement \gls{omp} and \gls{ols} using $\SqrtMat{\SCovMat}_{\TgtInd}$ rather than the observation matrix $\SRxSigMat_{\TgtInd}$ because $\ObsAll\SubAll \gg \SRxAll$ in the considered scenario.
This choice substantially reduces processing time.
\figurename~\ref{fig:Timing_comparison} reports these results, where solid lines denote \gls{fft}-accelerated implementations and dotted lines denote non-accelerated implementations.
To keep the comparison consistent, we show only the eigenvalue-based \gls{aic} criterion, so that all methods are compared under the same target-count estimate.

The observed timing trends are consistent with the theoretical complexity summary in Table~\ref{tab:complexity_comparison}.
In \figurename~\ref{fig:Timing_vs_K}, iterative-method runtime scales approximately linearly with $\TgtAll$.
However, this increase becomes weaker at high $\TgtAll$ because the eigenvalue-based \gls{aic} criterion saturates for large target counts \cite{willameStopping}.
By comparison, \gls{music} is less sensitive to $\TgtAll$ because this parameter only changes subspace dimensions through $\TgtAll_\SRxAll$.

In \figurename~\ref{fig:Timing_vs_Q}, the runtime of all methods scales linearly with the number of subcarriers $\SubAll$.
This trend is mainly driven by the increase in $\CorrRatio$ as $\SubAll$ grows, which improves the eigenvalue-based \gls{aic} criterion and thereby increases the number of detected targets across methods.

\figurename~\ref{fig:Timing_vs_N} shows that runtime scales quadratically with the number of antennas $\SRxAll$.
Here, the number of targets detected by the eigenvalue-based \gls{aic} criterion changes only marginally with $\SRxAll$.
Therefore, the dominant effect is the cost of evaluating the selection-step objective function, which scales as $\SRxAll^2$ (or $\SRxAll\log(\SRxAll)$ with \gls{fft} acceleration) for \gls{omp}, \gls{ols}, and \gls{ompimusic}, and as $\SRxAll$ (or $\log(\SRxAll)$ with \gls{fft} acceleration) for \gls{music} and \gls{ompimusic}.
Additionally, the \gls{evd} cost increases with $\SRxAll$, which further contributes to the runtime increase of all methods.

Overall, in agreement with the complexity analysis, \gls{gimusic} methods are less computationally demanding than \gls{omp} and \gls{ols}, respectively.
In addition, \gls{fft} acceleration significantly reduces processing time across all methods, with the largest gains for iterative methods, making them more suitable for real-time operation and competitive with \gls{music}.
At low $\TgtAll$, \gls{ompimusic} can even be faster than \gls{music}, because \gls{music} requires a more expensive peak-detection stage, whereas \gls{ompimusic} uses a simple maximum search.

\subsubsection{\textbf{\gls{giwmusic} Performance}}
For visual clarity, the curves of \gls{giwmusic} methods in Figures~\ref{fig:Jstat_comparison}, \ref{fig:RMSE_comparison}, and \ref{fig:Timing_comparison} are omitted.
For detection and precision, the curves of \gls{ompiwmusic} and \gls{olsiwmusic} are superimposed on those of \gls{omp} and \gls{ols}, respectively.
This overlap indicates that the low-rank approximation introduced by \gls{giwmusic} does not degrade performance.
For timing, their curves are superimposed on \gls{ompimusic} and \gls{olsimusic}, respectively, because weighting introduces a negligible computational overhead relative to unweighted variants.
Therefore, \gls{giwmusic} methods are fast alternatives to \gls{omp} and \gls{ols}, respectively, without performance degradation.
However, in the considered scenario, they are outperformed by their non-weighted \gls{gimusic} counterparts in terms of precision and detection performance.

\section{Conclusion} \label{sec:conclusion}
In this paper, we presented novel \gls{gimusic} algorithms for \gls{doa} estimation.
These methods integrate the \gls{music} subspace approach into the iterative greedy frameworks of \gls{omp} and \gls{ols}.
Unlike existing iterative \gls{music} approaches, the proposed methods adopt a principled greedy approach and avoid recomputing a new \gls{evd} at each iteration, thereby reducing computational cost.
To further lower complexity, we also introduced \gls{fft}-based acceleration of the \gls{gimusic} selection steps when a \gls{ula} is used.

Through numerical simulations, we demonstrated that the proposed \gls{ompimusic} and \gls{olsimusic} methods  outperform their respective \gls{omp} and \gls{ols} baselines in terms of detection and precision, while also improving upon classical \gls{music}.
In parallel, the complexity analysis showed that the proposed methods are less demanding than their \gls{omp} and \gls{ols} counterparts, with only limited additional overhead relative to \gls{music}.
The \gls{giwmusic} variants provide low-rank approximations of the corresponding \gls{omp} and \gls{ols} selection steps, without performance degradation and with reduced computational cost.
However, they are outperformed by their unweighted \gls{gimusic} counterparts in the considered scenario.
As a final contribution, we interpreted these outcomes using two diagnostic metrics, namelyn the steering-vector correlation metric ($\ResRatio$) and the signal correlation metric ($\CorrRatio$), which provide scenario-agnostic insight into expected performance beyond the passive-radar setting considered here.

In conclusion, these results indicate that \gls{ompimusic} and \gls{olsimusic} offer an effective balance between estimation performance and complexity for real-time applications.
The choice between the two variants can therefore be guided by the desired performance--complexity trade-off.
Building on this foundation, future work will extend the proposed \gls{gimusic} framework to other radar configurations, including \gls{mimo} and multistatic systems.

\balance
\bibliographystyle{IEEEtran}
\bibliography{IEEEabrv,References}
\end{document}